\newtheorem{theorem}{Theorem}[section]
\newtheorem{corollary}[theorem]{Corollary}
\newtheorem{lemma}[theorem]{Lemma}
\newtheorem{proposition}[theorem]{Proposition}
\newtheorem{claim}[theorem]{Claim}
\newtheorem{definition}[theorem]{Definition}
\newtheorem{example}[theorem]{Example}
\newtheorem{observation}[theorem]{Observation}
\def\squarebox#1{\hbox to #1{\hfill\vbox to #1{\vfill}}}
\newcommand{\qed}{\hspace*{\fill}\vbox{\hrule\hbox{\vrule\squarebox{.667em}\vrule}\hrule}\smallskip}
\newenvironment{proof}{\noindent{\bf Proof:~~}}{\(\qed\)}
\newcommand{\xhdr}[1]{\paragraph{\bf #1.}}
\newcommand{\comment}[1]{}
\newcommand{\eps}{\varepsilon} 
\newcommand{\expect}{{\mathbb{E}}}
\begin{document}
	\title{Optimal Stopping with Behaviorally Biased Agents: \\The Role of Loss Aversion and Changing Reference Points}
	
	\author{Jon Kleinberg
		\thanks{
			Cornell University.
			Email: kleinber@cs.cornell.edu.
			Research supported in part by a Vannevar Bush Faculty Fellowship, MURI grant W911NF-19-0217, 
			AFOSR grant FA9550-19-1-0183, and BSF grant 2018206.
		}
		\and Robert Kleinberg
		\thanks{
			Cornell University.
			Email: rdk@cs.cornell.edu.
			Research supported by NSF Grant CCF-1512964.
		}
		\and
		Sigal Oren
		\thanks{
			Ben-Gurion University of the Negev.
			Email: sigal3@gmail.com. 
			Work supported by BSF grant 2018206 and ISF grant 2167/19.
		}
	}
	\date{May, 2020}
	
	\begin{titlepage}
		\maketitle
		
		\begin{abstract}
			People are often reluctant to sell a house, or shares of stock, 
below the price at which they originally bought it.
While this is generally not consistent with rational utility maximization,
it does reflect two strong empirical regularities that are
central to the behavioral science of human decision-making:
a tendency to evaluate outcomes relative to a {\em reference point}
determined by context (in this case the original purchase price),
and the phenomenon of {\em loss aversion} in which people are
particularly prone to avoid outcomes below the reference point.
Here we explore the implications of reference points and loss aversion
in optimal stopping problems, where people evaluate a sequence of
options in one pass, either accepting the option and stopping the search
or giving up on the option forever.
The best option seen so far sets a reference point that shifts
as the search progresses, and a biased decision-maker's utility
incurs an additional penalty when they accept a later option 
that is below this reference point.

We formulate and study a behaviorally well-motivated version of the 
optimal stopping problem that incorporates these notions of
reference dependence and loss aversion.
We obtain tight bounds on the performance of
a biased agent in this model
relative to the best option obtainable in retrospect
(a type of {\em prophet inequality} for biased agents), as well as
tight bounds on the ratio between the performance of a biased agent
and the performance of a rational one.
We further establish basic monotonicity results, and show an
exponential gap between the performance of a biased agent in a
stopping problem with respect to a worst-case versus a random order.
As part of this, we establish fundamental differences 
between optimal stopping problems for rational versus biased agents,
and these differences inform our analysis.

		\end{abstract}
		
		\thispagestyle{empty}
	\end{titlepage}

\section{Introduction} One of the central human biases studied in behavioral economics is
{\em reference dependence} --- people's tendency to evaluate an
outcome not in absolute terms but instead relative
to a {\em reference point} that reflects some notion of the status quo
\cite{kahneman1979prospect}.
Reference dependence interacts closely with a related
behavioral bias, {\em loss aversion}, in which people weigh
losses more strongly than gains of comparable absolute values.
Taken together, these two effects produce a fundamental behavioral
regularity in human choices: once a reference point has been
established, people tend to avoid outcomes in which
they experience a loss relative to the reference point.

This effect can be seen in simple examples.
One well-known instance of the effect is the empirical evidence that
individual investors will tend to avoid selling a stock unless it
has exceeded the price at which they purchased it; the purchase
price thus acquires a special status in the investor's decision-making
\cite{odean1998investors}.
In this example, the purchase price $x$ serves as the reference point,
and the investor's loss aversion relative to this reference point
leads them to hold on to a stock while its value is below $x$.
There are two useful points to note about this example, and
reference dependence more generally.
First, it is a deviation from rational behavior: once the investor
has purchased the stock, the value of the purchase price is
irrelevant to any future decisions, which should be based purely
on predictions of the stock's future performance.
Second, the reference point
will often be set by some notion of status quo or default that is
determined by the semantics of the environment.
Once we realize, for example,
that the purchase price serves naturally as a reference
point for stocks, we might conjecture that a similar effect should
hold for other items, like houses --- that
home-owners should exhibit different behaviors in situations where
they are contemplating selling their house at a value
either above or below the price at which they originally purchased it.
This effect too is borne out in empirical data
\cite{genesove2001loss}.

In more complex examples, and most relevant to our work here,
the reference may shift while an agent is making a decision.
Consider for example an agent who is trying to make a large purchase
or hire a job candidate, and does this by evaluating candidate options
in one pass in a take-it-or-leave-it fashion --- with each candidate
they must either accept it and end the search, or give up on
it as an option forever.
Experimental studies by Schunk and Winter \cite{schunk2009relationship}
show that people in this type of task behave
consistently with the notion that they are maintaining a time-varying
reference point equal to the best option they have seen so far.
This means that if they settle for a candidate $A$ that is worse than
a better candidate $B$ that they have seen in the past,
their utility from selecting $A$ will be reduced by some notion of
loss relative to the high reference point set by $B$.
In these studies, people's decisions are best explained by a model
in which they take into account this anticipated sense of
loss prospectively in making their choices; they operate so as to
reduce the chance that they will have to choose a future option
that is dominated by one that they have earlier passed up.
We can recognize this empirical regularity at an intuitive level as well:
if we purchase a house that is worse than one that we passed up on
earlier in our search, or take a job that is worse than one we
were offered earlier and turned down,
we tend to feel that we have experienced
a loss relative to what we could have achieved.
Our utility for the final outcome thus has a history-dependence, in
that it is affected by the options that we rejected prior to our
ultimate choice.

\xhdr{Implications for Optimal Stopping Problems}
These findings have interesting implications for how we might
model the ways in which human decision-makers with standard biases
approach optimal stopping problems.
Consider one of the canonical formulations of an optimal stopping problem,
which closely follows the structure of the Schunk-Winter
experiments described above:
we are presented with a sequence of $n$ candidates in order;
we know that candidate $t$ will have a value $v_t$ drawn from a
distribution $\mathcal F_t$; but we only see the materialized value of
$v_t$ when we get to candidate $t$ in order.
At that point, we must either accept $t$ and end the search, or give up
on $t$ forever.
The celebrated {\em prophet inequality} of Krengel and Sucheston \cite{krengel1978semiamarts}
asserts that there is an on-line policy for choosing one of the
candidates in such a way that the expected value of the selected
candidate is within a factor of two of the highest value of any
of the candidates when viewed in retrospect (i.e. the performance achieved
by a {\em prophet} that can see all the values in advance).
An active line of algorithmic research has developed to understand
the bounds that can be achieved in models of this form;
see the survey articles \cite{correa2019recent,lucier2017economic} and
the references therein.

Given the behavioral literature on reference dependence, it becomes
interesting to ask how the model should change to incorporate this bias.
In particular, consider this type of stopping problem being solved by an
agent with human biases based on reference dependence.
Let $\lambda \geq 0$
be a parameter representing the agent's level of loss aversion;
$\lambda$ is a multiplier on any losses the agent experiences relative
to a reference point.\footnote{We choose the natural, albeit simple, way of modeling loss aversion as linear in the loss to keep the focus on the role that the changing reference point plays. As part of future work it will be interesting to study the effects of different modelings of the loss function on the behavior of the agent. }
Now, when the agent contemplates a candidate of value $u$ in their
one-pass search and the value of the best candidate they have seen so far is $v$,  their utility from selecting $u\geq v$ will be equal to $u$, however for $u<v$ their utility will be
$$u - \lambda(v - u).$$
We will refer to an agent with this behavior as a
{\em reference-dependent agent} with loss-aversion $\lambda$.
Thus, $\lambda = 0$ corresponds to the traditional optimal stopping
problem with a rational agent, while larger values of $\lambda$
correspond to greater levels of loss aversion.
The experimental results suggest that we study the setting in which
agents are {\em sophisticated} about their reference dependence, in that
they make choices with the awareness that their utilities will be shifted
by the $\lambda(v - u)$ term. Thus, a reference-dependent agent 
	will choose a stopping rule that maximizes its expected ``shifted'' utility.  

\xhdr{The present work: Optimal stopping with reference dependence}
We thus have a basic new model of sequential decision-making
that is behaviorally well-motivated, incorporating one of the central
biases in human decision-making.
Our goal in the present work is to explore this model; we
provide both quantitative bounds for the performance of
reference-dependent agents relative to optimality, including a tight
prophet inequality for biased agents, as well as
more qualitative analysis of the model's behavior, including
questions about its monotonicity in the loss-aversion parameter $\lambda$
and the length of the sequence of candidates.

A brief overview of our results is as follows.
First, we show that there is a strategy by which a
reference-dependent agent with loss-aversion $\lambda$ can guarantee
an expected value within a factor of $\lambda + 2$ of the
best candidate in retrospect.
This can be viewed as a type of prophet inequality for biased agents,
distinct from other results of this form, which assume
the traditional rational model of decision-making.
We show that the bound can be obtained by a threshold rule,
analogous to Samuel-Cahn's approach to the classical prophet inequality
\cite{samuel1984comparison}
but with the threshold shifted by a parameter based on the bias.
The bound is tight for all $\lambda$, and it recovers the factor of 2 from
the classical prophet inequality as $\lambda$ goes to $0$.
A different comparison is between the performance of
a reference-dependent agent with loss-aversion $\lambda$
and a rational agent that also must operate on-line;
here we show that there is a strategy by which the reference-dependent
agent can guarantee an expected value within a factor of
$\lambda + 1$ of the rational agent, and this too is tight.
Essentially, the reason for these large gaps is that
reference-dependent agents with loss-aversion tend to 
select candidates too early in the sequence since they are more
worried about ending up with a candidate with lower value.

A growing line of work has studied the effect of random arrival order
on optimal stopping problems --- when the candidates are seen according
to an order selected from the uniform distribution on permutations.
For rational agents (without a notion of reference dependence) this
has been shown to improve the bound of 2 to a smaller constant
strictly between 1 and 2; after a sequence of improvements the best known bound is slightly less than
	$\frac{3}{2}$ and is due to Correa et al. \cite{correa2020prophet}.
For reference-dependent agents, we show that random ordering has
a large effect on the achievable bounds as a function of $\lambda$;
under a random ordering, a reference-dependent agent can achieve
an expected value that the optimum exceeds by a factor of only
$\Theta(\log \lambda)$, an exponential improvement over the
$\lambda + 2$ bound for worst-case orderings.
We observe an additional improvement by order of
magnitude when the distributions have only two values in their
support and the ordering of the candidates is the one that maximizes
the expected value of the candidate that the agent selects. For this case, we improve the bound to $2$.
This result is analogous to the result of \cite{agrawal2020optimal} showing that
for rational agents and $2$-point distributions the bound is $5/4$.


As noted above, we prove a number of monotonicity results showing
that the performance of biased agents degrades in certain
consistent directions.
First, suppose two reference-dependent agents have loss-aversion
parameters $\lambda$ and $\lambda'$, with $\lambda < \lambda'$.
Then we can show that the expected value of the candidate selected by the agent
with the lower parameter $\lambda$ is higher than the expected
value obtained by the agent with the higher parameter $\lambda'$.
To prove this we show that as the value of $\lambda$ increases the
agent tends to select candidates that are earlier
in the sequence of candidates. Second, we establish monotonicity in the sequence of candidates:
adding a candidate to the end of a sequence cannot
hurt a biased agent's performance, but
adding a candidate to the beginning of the sequence can reduce
the expected value of the candidate selected by the agent (by a tight factor of
$\lambda + 1$).
A result like this has no natural analogue for rational agents,
where adding options cannot hurt performance.

The analysis underlying these results highlights a number of
ways in which optimal stopping
questions for biased agents become qualitatively
different from their traditional analogues for unbiased agents.
Many of these flow from the initial observation that a
biased agent's optimal decisions in these problems are
history-dependent, since they depend on the maximum among
the realized values earlier in the sequence;
in contrast, a key feature of optimal stopping problems with
rational agents is their memoryless nature.
The three-way comparison between the optimal value (achieved by a prophet),
the value achieved by a rational agent,
and the value achieved by a reference-dependent agent
also highlights important aspects of the model,
including the fact that the largest of these worst-case gaps ---
the factor of $\lambda + 2$ between the optimal value and the
biased agent --- is strictly less than the product of the other
two gaps, suggesting that the worst cases are obtained at
fundamentally different points.
The behavior of biased agents under random orderings also
highlights new aspects of the model --- both in the exponential
improvement it yields in $\lambda$, and also in the observation
that the gap between the biased agent's value and the optimal value
can always be upper-bounded by the sequence length.
In contrast, no upper bound depending only on the sequence length
is possible for biased agents under a worst-case order.

In the next section, we present the model and accompanying
definitions and notations in full detail, and then provide proofs
for the results stated here.

\section{Model and Preliminaries} \label{sec-model} Consider the following selection problem. There are $n$ time steps. At each time step $t$ a new candidate arrives and its value $v_t$ is sampled from some known distribution $\mathcal F_t$ with non-negative support. We use $V_t$ to denote a random variable equals to the value of candidate $t$ and $v_t$ to denote its realized value. When candidate $t$ arrives its value $v_t$ is realized and the agent should decide whether to select the candidate or not. Selection decisions are final. That is, if an agent decided not to select a candidate it cannot change its mind later on. We assume that the agent has to select some candidate, thus, if the agent reaches the last candidate it would always select it. This is a natural assumption for a hiring scenario in which a manager has to recruit an employee to fill a vacancy. It is also applicable in many other settings, for example, a buyer that has to buy a car.

We consider an agent that has loss aversion with respect to the value of the best candidate it has considered so far. In particular, if the best candidate that was considered has value $v$ and the candidate that was selected has a value $u<v$ then the total utility of the loss averse agent would be $u-\lambda (v-u)$. Here $\lambda\geq 0$ is a parameter capturing the extent of loss aversion. $\lambda=0$ implies that the agent is completely rational.\footnote{We only consider non-negative values of $\lambda$ as a negative value of $\lambda$ implies an agent that prefers to select a candidate worse than the best it has considered, which makes less sense.} We refer to an agent that is loss averse with parameter $\lambda$ as a $\lambda$-biased agent and to the value of the best candidate so far as the \emph{reference value}.

Observe that the loss averse agent's planning problem can be modeled as a finite-horizon Markov decision process (MDP) in which the state encodes the reference value and the number of candidates considered. As such, the MDP value function satisfies the Bellman Equation and can be computed by dynamic programming. From the dynamic program we can easily infer the optimal stopping rule for a $\lambda$-biased agent. Throughout the paper we refer to this optimal stopping rule as the \emph{optimal $\lambda$-biased stopping rule}.

To help the reader get acquainted with the model, we briefly work out the details of the dynamic program: let $m$ be the number of possible values for candidates (i.e., all values that are in the support of any of the distributions $\mathcal F_t$). The table size is $m\times n$ and $U_{\lambda}[v,t]$ is the expected utility of playing optimally at times $t, \ldots,n$ when the reference value is $v$ (i.e., $v=\max\{v_1,\ldots,v_{t-1}\}$). With this notation the agent selects candidate $t$ if and only if:
\begin{align*}
v_t - \lambda(v-v_t)^+ \geq U_\lambda[v \lor v_t,t+1].
\end{align*}
where $x \lor y=\max\{x,y\}$ and $(x)^+$ is $x$ for $x\geq 0$ and $0$ otherwise. This gives us the following formula for the dynamic program:
\begin{align*}
U_\lambda[v,t]:=\expect_{v_t  \sim \mathcal F_t} [\underbrace{U_\lambda[v \lor v_t,t+1]}_{v_t \text{~is not selected}} \lor \underbrace{(v_t - \lambda(v-v_t)^+)}_{v_t \text{~is selected}} ].
\end{align*}

Recall that if the agent reaches the last candidate it has to select it. Thus, we have that $U_\lambda[v,n]:=\expect_{v_n  \sim \mathcal F_n}[v_n - \lambda(v-v_n)^+]$ for any value $v$ that is in the support of some distribution. We can solve the dynamic program by continuing to fill the cells backwards. Observe that $U_{\lambda}[0,1]$ holds the expected utility of the optimal $\lambda$-biased stopping rule as in the beginning the agent's reference value is $0$.

For agents that do not exhibit a bias it is well known  that the optimal stopping rule can be described by a sequence of monotonically decreasing thresholds $\tau_1,\ldots, \tau_n$ such that a candidate $t$ is selected if and only if $v_t \geq \tau_t$. (See, for example, Theorem 3.2 of the textbook \cite{crs-book}.) In Section \ref{sec-mon} we show that the optimal $\lambda$-biased stopping rule can also be described as a sequence of thresholds, albeit the thresholds for biased agents are history dependent.  To prove this claim we observe a monotonicity property: for reference values $v'>v$ and any $t\geq 1$ we have that $U_\lambda[v',t] \leq U_\lambda[v,t]$. We prove this observation and observe other types of monotonicity in Section \ref{sec-mon}.

\section{The Cost of Loss-Aversion} \label{sec-bound} There are two interesting comparisons one can make here. First, we compare between the expected value of a candidate that a biased agent selects and the value of the best candidate in hindsight (what a prophet would have chosen). The second comparison is between the expected value of the candidate that was selected by a $\lambda$-biased agent in comparison with the value of the candidate selected by an unbiased agent.  We focus on comparing the value of the selected candidate and not the utility of the agent, as the value of the selected candidate is an objective measure and hence is more appropriate for quantifying the actual loss (as oppose to the perceived loss) of a $\lambda$-biased agent due to its bias.

We define random variables $V^*$ and $V_\lambda$ (for $\lambda\geq0$) to equal the value of the best candidate in hindsight and the candidate selected by a $\lambda$-biased agent using the optimal stopping rule respectively.
%
 We show that the ratio between $\expect[V^*]$ and $\expect[V_{\lambda}]$ can be as high as $\lambda +2$ and the ratio between $\expect[V_{0}]$ (i.e., the expected value of a candidate selected by an unbiased agent) and $\expect[V_{\lambda}]$ can be as high as $\lambda+1$. This is done by adapting a classic example for the prophet inequality.


\begin{example} \label{exm-ratio}
	For $\eps>0$, let $V_1 = \frac{1}{1+(1-\eps)\lambda}$ and
	$V_2= \begin{cases}
	\frac{1}{\eps}&\text{w.p $\eps$}\\
	0&\text{w.p $1-\eps$}
	\end{cases}$.
\end{example}
\begin{claim}
For every $\lambda>0$, there exists an instance such that the ratio $\frac{\expect[V^*]}{\expect[V_\lambda]}$ is arbitrarily close to $\lambda+2$.
\end{claim}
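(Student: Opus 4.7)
The plan is to directly compute both expectations from Example~\ref{exm-ratio} and take $\eps \to 0$. Here $V_1$ is deterministic and arrives first, while $V_2$ arrives second and is either the large value $1/\eps$ (with probability $\eps$) or $0$ (with probability $1-\eps$). The agent is forced to take $V_2$ if it rejects $V_1$.

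First I would compute the prophet's value. Since $1/\eps > V_1$ for small $\eps$, we have
\[
\expect[V^*] \;=\; \eps \cdot \tfrac{1}{\eps} + (1-\eps) \cdot V_1 \;=\; 1 + \frac{1-\eps}{1+(1-\eps)\lambda}.
\]
Next I would analyze what the $\lambda$-biased agent does at $t=1$, starting from reference value $0$. Taking $V_1$ yields utility exactly $V_1$ (no loss). Rejecting $V_1$ raises the reference to $V_1$ and then forces acceptance of $V_2$, so the expected shifted utility of rejecting is
\[
\eps \cdot \tfrac{1}{\eps} + (1-\eps)\bigl(0 - \lambda(V_1 - 0)\bigr) \;=\; 1 - (1-\eps)\lambda V_1.
\]
The value $V_1 = \tfrac{1}{1+(1-\eps)\lambda}$ is calibrated precisely so these two quantities are equal; with any tie-breaking that favors acceptance (or equivalently, after an infinitesimal upward perturbation of $V_1$, which strictly tips the comparison toward acceptance by the factor $1+(1-\eps)\lambda$), the optimal $\lambda$-biased rule selects $V_1$. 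Hence $\expect[V_\lambda] = V_1 = \tfrac{1}{1+(1-\eps)\lambda}$.

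Finally, I would take the ratio:
\[
\frac{\expect[V^*]}{\expect[V_\lambda]} \;=\; \bigl(1+(1-\eps)\lambda\bigr)\left(1 + \frac{1-\eps}{1+(1-\eps)\lambda}\right) \;=\; 1 + (1-\eps)\lambda + (1-\eps),
\]
which tends to $\lambda + 2$ as $\eps \to 0$, proving the claim.

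The only subtle point is the tie at the threshold value of $V_1$: the two strategies have identical expected shifted utility, so one must either adopt a tie-breaking convention or replace $V_1$ by $V_1 + \delta$ for arbitrarily small $\delta > 0$ and then let $\delta \to 0$ after $\eps \to 0$. This is routine and does not affect the limit; everything else is a direct calculation. There is no combinatorial obstacle because the instance has only two candidates.
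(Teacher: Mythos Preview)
Your proof is correct and follows essentially the same approach as the paper: both use Example~\ref{exm-ratio}, verify that the $\lambda$-biased agent (weakly) prefers to stop at $V_1$, and compute the ratio $\lambda+2-\eps(\lambda+1)\to\lambda+2$. Your treatment is in fact slightly more careful than the paper's, since you explicitly address the tie at $V_1$ via perturbation or a tie-breaking convention, whereas the paper silently resolves the equality in favor of stopping.
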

\begin{proof}
	Consider the family of instances defined in Example \ref{exm-ratio}. A $\lambda$-biased agent always selects the first candidate. To see why this is the case observe that:
	\begin{align*}
	\frac{1}{1+(1-\eps)\lambda} \geq 1 -\lambda(1-\eps) \cdot \left( \frac{1}{1+(1-\eps)\lambda} \right) = \frac{1+(1-\eps)\lambda -(1-\eps)\lambda}{1+(1-\eps)\lambda} = \frac{1}{1+(1-\eps)\lambda}
	\end{align*}
	The expected value of the best candidate in hindsight is: $\frac{1-\eps}{1+(1-\eps)\lambda}+1 = \frac{\lambda+2-\eps(\lambda+1)}{1+(1-\eps)\lambda}$. Thus, we have that
$	\frac{\expect[V^*]}{\expect[V_\lambda]}= \lambda+2-\eps(\lambda+1)$.
\end{proof}
\begin{claim}
	For every $\lambda>0$, there exists an instance such that the ratio $\frac{\expect[V_0]}{\expect[V_\lambda]}$ is arbitrarily close to $\lambda+1$.
\end{claim}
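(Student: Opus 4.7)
The plan is to reuse the same instance from Example \ref{exm-ratio} and observe that while a $\lambda$-biased agent (as shown in the previous claim) always selects the first candidate, a rational agent behaves differently. First I would recall that the previous claim established $\expect[V_\lambda] = \frac{1}{1+(1-\eps)\lambda}$ on this instance, since the biased agent is driven by loss aversion to take $V_1$ rather than risk realizing the bad outcome of $V_2$ after having seen a positive reference value.

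Next I would analyze the rational agent's behavior. An unbiased agent faced with $V_1 = \frac{1}{1+(1-\eps)\lambda}$ at time $1$ compares it to the expected value of continuing, which is $\expect[V_2] = \eps \cdot \frac{1}{\eps} + (1-\eps) \cdot 0 = 1$. Since $\frac{1}{1+(1-\eps)\lambda} < 1$ for every $\lambda > 0$ and $\eps \in (0,1)$, the optimal unbiased policy rejects the first candidate and is forced to take the second, giving $\expect[V_0] = 1$.

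Taking the ratio then yields
\begin{align*}
\frac{\expect[V_0]}{\expect[V_\lambda]} = \frac{1}{\frac{1}{1+(1-\eps)\lambda}} = 1 + (1-\eps)\lambda,
\end{align*}
which tends to $\lambda + 1$ as $\eps \to 0$. There isn't really a main obstacle; the content is already packaged in the previous claim's computation together with the observation that the unbiased and biased agents split on this instance — the biased agent locks in $V_1$ to avoid the downside of reaching $V_2 = 0$ having passed up a positive reference, whereas the rational agent correctly gambles on $V_2$ because its expectation exceeds $V_1$. This separation in behavior is precisely what the example is engineered to produce.
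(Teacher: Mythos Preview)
Your proposal is correct and follows essentially the same approach as the paper: both use Example \ref{exm-ratio}, invoke the previous claim to conclude the $\lambda$-biased agent always takes $V_1$, observe that the unbiased agent prefers $V_2$ because $\expect[V_2]=1 > V_1$, and compute the ratio as $1+(1-\eps)\lambda \to \lambda+1$.
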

\begin{proof}
	Consider the family of instances defined in Example \ref{exm-ratio}. We already observed that a $\lambda$-biased agent always selects the first candidate. We now consider an unbiased agent. Notice that the expected value of the second candidate is higher than the expected value of the first candidate. Thus the unbiased agent always selects the second candidate, implying that:
		\begin{align*}
	\frac{\expect[V_0]}{\expect[V_\lambda]}= \frac{1+(1-\eps)\lambda}{1} = \lambda+1-\eps \cdot \lambda
	\end{align*}
\end{proof}

It is interesting to observe that the family of instances defined in Example \ref{exm-ratio} can also be used to show that the ratio $\frac{\expect[V_{\lambda'}]}{\expect[V_\lambda]}$ for any $0 \leq \lambda'< \lambda$ can be as high as $\lambda+1$. The reason for this is that since the $\lambda$-biased agent is indifferent between selecting the first candidate and the second candidate, any $\lambda'$-biased agent would strictly prefer selecting the second candidate.

Our main results in this section demonstrate that the lower bounds we just established for $\frac{\expect[V^*]}{\expect[V_\lambda]}$ and $\frac{\expect[V_0]}{\expect[V_\lambda]}$ are tight. The first proof builds on \cite{samuel1984comparison} showing a $2$-approximation for prophet inequalities can always be achieved by a simple threshold strategy.

\begin{theorem} \label{thm-prophet}
$\dfrac{\expect[V^*]}{\expect[V_\lambda]}\leq \lambda+2$.
\end{theorem}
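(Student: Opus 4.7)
The plan is to adapt the Samuel--Cahn threshold technique with a carefully tuned quantile that absorbs the potential loss penalty. First I would fix a threshold $\tau$ with $\Pr[V^* \geq \tau] = \frac{\lambda+1}{\lambda+2}$ (handling atoms by a randomized threshold in the usual way), and consider the natural rule $\pi_\tau$: stop at the first $t$ with $v_t \geq \tau$, or at $t=n$ if forced. The key identity for linking this to $V_\lambda$ is the two-step inequality
\[
\expect[V_\lambda] \;\geq\; \expect[\text{utility of optimal $\lambda$-biased rule}] \;\geq\; \expect[\text{utility of } \pi_\tau],
\]
where the first inequality uses that the realized value dominates the realized utility (the loss penalty is nonnegative), and the second uses that the optimal rule is optimal with respect to expected utility.

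Next I would lower-bound the expected utility of $\pi_\tau$. Let $E = \{\exists t : v_t \geq \tau\}$ and $p = \Pr[E]$. On $E$, the rule stops at some $T<n$ (or $T=n$ with $v_n \geq \tau$) where by construction the reference $M_{T-1} < \tau \leq v_T$, so the loss is zero and the utility equals $v_T$. On $\bar E$, the rule is forced to take $v_n$, and since $M_{n-1} < \tau$, the loss is at most $\lambda \tau$. Thus
\[
\expect[\text{utility of }\pi_\tau] \;\geq\; \expect[v_T] - \lambda\tau(1-p).
\]
For the value term I would use the Samuel--Cahn excess decomposition: $\expect[v_T] \geq \tau p + \expect[(v_T - \tau)^+]$, and since $\Pr[\text{reach }t] \geq 1-p$ uniformly in $t$,
\[
\expect[(v_T-\tau)^+] \;\geq\; (1-p)\sum_t \expect[(V_t - \tau)^+] \;\geq\; (1-p)\,\expect[(V^* - \tau)^+].
\]

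Combining these, writing $B = \expect[(V^*-\tau)^+]$, gives
\[
\expect[\text{utility of }\pi_\tau] \;\geq\; \tau\bigl[p - \lambda(1-p)\bigr] + (1-p)\,B.
\]
The choice $p = \frac{\lambda+1}{\lambda+2}$ is exactly what makes both coefficients equal $\frac{1}{\lambda+2}$: indeed $p - \lambda(1-p) = \frac{\lambda+1}{\lambda+2} - \frac{\lambda}{\lambda+2} = \frac{1}{\lambda+2}$ and $1-p = \frac{1}{\lambda+2}$. Hence
\[
\expect[V_\lambda] \;\geq\; \frac{\tau + B}{\lambda+2} \;\geq\; \frac{\expect[V^*]}{\lambda+2},
\]
using the standard bound $\expect[V^*] \leq \tau + \expect[(V^*-\tau)^+]$.

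The main obstacle is handling the loss term in the forced-stop case: one has to argue that the loss $\lambda(M_{n-1}-v_n)^+$ can be incurred only on $\bar E$, and there is bounded by $\lambda\tau$ because $M_{n-1} < \tau$ on $\bar E$. Once this is in place, the entire proof reduces to choosing the quantile so that the gain on $E$ exactly covers the worst-case loss on $\bar E$, which is why $p = (\lambda+1)/(\lambda+2)$ is the right threshold and recovers Samuel--Cahn's $p=1/2$ as $\lambda \to 0$.
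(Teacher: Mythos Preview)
Your proposal is correct and follows essentially the same approach as the paper: choose a threshold at the $\frac{\lambda+1}{\lambda+2}$ quantile of $V^*$, sandwich $\expect[V_\lambda]$ between the realized value and the biased utility of the optimal rule, then lower-bound the biased utility of the threshold rule by the Samuel--Cahn excess decomposition plus the observation that the loss is at most $\lambda\tau$ and occurs only on the probability-$(1-p)$ event that no value crosses the threshold. The paper presents the value lower bound via the layer-cake integral $\int_0^\infty \Pr(V(\tau_\lambda)>y)\,dy$ split at $\theta$, whereas you write it as $\expect[v_T]\ge \tau p+\expect[(v_T-\tau)^+]$ followed by $\expect[(v_T-\tau)^+]\ge (1-p)\sum_t\expect[(V_t-\tau)^+]$, but these are the same computation and lead to identical coefficients.
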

\begin{proof}
	We consider a simple threshold strategy and show that even though this is not necessarily the optimal stopping rule,
	 the expected value for using this strategy is at least $\frac{\expect[V^*]}{\lambda+2}$. 

	We define our threshold strategy as follows.
	\begin{definition} [$\theta$-threshold strategy]
	The {\em $\theta$-threshold strategy} is the stopping rule that selects the first candidate $t$ such that $v_t> \theta$, where $\theta$ is defined such that $Pr(V^* > \theta) = \frac{\lambda+1}{\lambda+2} \triangleq \alpha$. \footnote{Notice that if the distribution
			of $V^*$ contains point masses then such a $\theta$ does not necessarily exist. In Appendix \ref{app-tiebreak} we show how to extend  $\theta$-threshold strategies to accommodate such cases as well.
	  }
	If there is no such candidate the strategy selects the last candidate.
	\end{definition}
Notice that the $\theta$-threshold strategy ensures us that we select a candidate of value at least $\theta$ with probability $\alpha$. We denote this strategy by $\tau_{\lambda}$ and the optimal $\lambda$-biased stopping rule by $\pi_\lambda$.

We show a stronger claim: the ratio between $\expect[V^*]$ and the \emph{utility} of a $\lambda$-biased agent using this threshold strategy is at most $\lambda+2$. This implies a bound on $\expect[V^*]/ \expect[V_\lambda]$ since
\begin{align*}
\dfrac{\expect[V^*]}{\expect[V_\lambda]} \leq
\dfrac{\expect[V^*]}{\expect[V(\pi_\lambda)] - \lambda \cdot \expect[L(\pi_\lambda)]} \leq
\dfrac{\expect[V^*]}{\expect[V(\tau_\lambda)] - \lambda \cdot \expect[L(\tau_\lambda)]} .
\end{align*}
where $V(\sigma)$ and $L(\sigma)$ are random variables equal, respectively, to the value of the selected candidate and the loss (i.e., difference between the value of the reference value and the chosen candidate, in case this difference is positive) of the stopping rule $\sigma$. The second inequality is due to the fact that the optimal $\lambda$-biased stopping rule aims to maximize the expected utility of the $\lambda$-biased agent.

We now show that $\frac{\expect[V^*]}{\expect[V(\tau_\lambda)] - \lambda \cdot \expect[L(\tau_\lambda)]} \leq \lambda +2$. We begin by bounding the expected value of the candidate selected by the threshold strategy.  Note that
\begin{align*}
\expect[V(\tau_\lambda)] = \int_0^\infty Pr(V(\tau_\lambda)>y)dy = \int_0^\theta Pr(V(\tau_\lambda)>y)dy +\int_\theta^\infty Pr(V(\tau_\lambda)>y)dy
\end{align*}
We bound each of these integrals separately. First we observe that
\begin{align*}
 \int_0^\theta Pr(V(\tau_\lambda)>y)dy \geq  \int_0^\theta \alpha \, dy = \alpha \cdot \theta
\end{align*}
This is simply because we accept a candidate that its value is higher than $\theta$ with probability $\alpha$. Thus, for any $y<\theta$ the probability that we accept a candidate with value higher than $y$ is at least $\alpha$.
For the second integral we bound the value of $Pr(V(\tau_\lambda)>y)$ as follows:
\begin{align*}
Pr(V(\tau_\lambda)>y) = \sum_{t=1}^n Pr(V_t > y, V_{-t} < \theta)
\end{align*}
where $V_{-t} < \theta$ is the event in which the values of all candidates prior to $t$ were below $\theta$. As these are two independent events we have that:
\begin{align*}
\sum_{t=1}^n Pr(V_t > y, V_{-t} < \theta) =  \sum_{t=1}^n Pr(V_t > y)\cdot Pr(V_{-t} < \theta)
\end{align*}
Observe that $Pr(V_{-t} < \theta) \geq (1-\alpha)$ since with probability $1-\alpha$ we have that in each of the time steps the value of the candidate was less than $\theta$ and this is just a smaller range. Moreover by taking a union bound we get that $ \sum_{t=1}^n Pr(V_t > y) \geq Pr(V^*>y)$, since the value of the best candidate exceeds $y$ if and only if there exists $t$ with $v_t > y$. Thus we conclude that:
\begin{align*}
\expect[V(\tau_\lambda)] \geq \alpha \cdot \theta + (1-\alpha)\int_{\theta}^{\infty} Pr(V^*>y)dy
\end{align*}
Observe that:
\begin{align*}
\int_{\theta}^{\infty} Pr(V^*>y) \, dy =
\int_{0}^{\infty} Pr(V^* - \theta > z) \, dz =
\expect[(V^*-\theta)^+]
\end{align*}

Thus, we have that:
$\expect[V(\tau_{\lambda})] \geq \alpha \cdot \theta+(1-\alpha)\expect[(V^*-\theta)^+]$. Next, we observe that the expected loss of the threshold strategy is at most $(1-\alpha)\theta$ since the threshold strategy experiences zero loss except when it selects the last candidate, an event that happens with probability $1-\alpha$ and results in loss bounded above by $\theta$ in the worst case that the last candidate has a value of at least $0$ while the maximum preceding value is $\theta-\eps$. Putting this together we get that:
\begin{align*}
\expect[V(\tau_\lambda)]-\lambda \expect[L(\tau_{\lambda})] &\geq \alpha \cdot \theta +(1-\alpha)\expect[(V^*-\theta)^+] - \lambda(1-\alpha) \cdot \theta \\
&= (\alpha-\lambda(1-\alpha) )\cdot \theta + (1-\alpha)\expect[(V^*-\theta)^+]
\end{align*}
By plugging in $\alpha =\frac{\lambda+1}{\lambda+2}$ we get that:
\begin{align*}
\expect[V(\tau_\lambda)]-\lambda \expect[(L\tau_{\lambda})] \geq \frac{1}{\lambda+2} \cdot \theta + \frac{1}{\lambda+2} \cdot \expect[(V^*-\theta)^+] \geq  \frac{1}{\lambda+2} \cdot \expect[V^*]
\end{align*}
\end{proof}

\begin{theorem} \label{thm-ops-bias-ratio}
	$\dfrac{\expect[V_0]}{\expect[V_\lambda]} \leq \lambda +1$
\end{theorem}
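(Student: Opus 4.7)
The plan is to exhibit a strategy for the biased agent whose expected utility lower-bounds $\expect[V_\lambda]$ by at least $\expect[V_0]/(\lambda+1)$, mirroring the approach of Theorem~\ref{thm-prophet}. Since $\expect[V_\lambda] \geq \expect[V(\pi_\lambda)] - \lambda\expect[L(\pi_\lambda)] \geq \expect[V(\sigma)] - \lambda\expect[L(\sigma)]$ for any strategy $\sigma$ (using that loss is nonnegative and $\pi_\lambda$ is utility-optimal for the biased agent), this reduces the theorem to constructing a suitable $\sigma$.

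The case $n=2$ illustrates the key idea. The biased agent accepts $V_1$ iff $V_1 + \lambda\expect[(V_1-V_2)^+] \geq \expect[V_2]$, whereas the rational accepts iff $V_1 \geq \expect[V_2]$, so the biased accepts on a superset of the rational's accepting event. On the ``disagreement event'' where the biased accepts but the rational passes, the biased acceptance inequality combined with the pointwise bound $(V_1 - V_2)^+ \leq V_1$ yields $\expect[V_2] \leq (\lambda+1) V_1$; hence on this event the rational's expected gain $\expect[V_2]$ exceeds the biased's gain $V_1$ by at most a factor of $\lambda+1$. Outside the disagreement event the two agents contribute identically. Integrating over $V_1$ gives $\expect[V_0] \leq (\lambda+1) \expect[V_\lambda]$.

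For general $n$ I would use the strategy $\sigma^\star$ that accepts $V_t$ iff $V_t \geq \tau_t/(\lambda+1)$, where $\tau_t = \expect[V_0^{(t+1,n)}]$ is the rational agent's continuation value (so the rational's own decision rule is to accept iff $V_t \geq \tau_t$). For $n=2$ this recovers the strategy above. For larger $n$ I would proceed by backwards induction on $t$, attempting to establish an invariant of the form $U_\lambda(\sigma^\star, v, t) \geq \tau_{t-1}/(\lambda+1) - \lambda v$ for all reference values $v \geq 0$; evaluating at $t=1$, $v=0$ would then yield the desired conclusion $U_\lambda(\sigma^\star, 0, 1) \geq \expect[V_0]/(\lambda+1)$. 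The pointwise bound $(v - V)^+ \leq v$ that powered the $n=2$ argument would play the analogous role at each inductive step.

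The hardest part is that this naive strengthened invariant can fail at the inductive step when the distribution of $V_t$ puts mass on values strictly between $0$ and $\tau_t/(\lambda+1)$: the slack lost by bounding the continuation utility by the inductive hypothesis compounds across time steps. I expect the fix to require either a more refined invariant that also tracks how $v$ compares with the rational thresholds $\tau_{t+1}, \tau_{t+2}, \ldots$, or a modification of $\sigma^\star$ whose acceptance threshold at time $t$ depends on the current reference value rather than being the fixed $\tau_t/(\lambda+1)$. Identifying the correct invariant or the correct reference-aware threshold so that the induction closes uniformly over all distributions and horizons is the core technical obstacle; the right formulation should be driven by the observation, already visible in the $n=2$ analysis, that the biased agent's own acceptance condition automatically controls its accepted value relative to the rational continuation value up to the factor $\lambda+1$.
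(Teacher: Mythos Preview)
Your $n=2$ argument is correct, but as you yourself note, the general-$n$ induction with the fixed thresholds $\tau_t/(\lambda+1)$ does not close, and you have not identified the reference-aware modification that would make it close. So the proposal is incomplete.

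The paper sidesteps this difficulty by a different and much shorter route: instead of designing a new threshold rule, it uses the rational rule $\pi_0$ itself as the candidate strategy and then invokes Theorem~\ref{thm-prophet}. The key observation is the pointwise bound $L(\pi_0) \leq V^* - V_0$ (the reference value never exceeds $V^*$, and the selected value is exactly $V_0$), so $\expect[L(\pi_0)] \leq \expect[V^*] - \expect[V_0]$. Utility-optimality of $\pi_\lambda$ then gives
\[
\expect[V_\lambda] \;\geq\; \expect[V(\pi_0)] - \lambda\,\expect[L(\pi_0)] \;\geq\; (1+\lambda)\,\expect[V_0] - \lambda\,\expect[V^*].
\]
Now substitute $\expect[V^*] \leq (\lambda+2)\,\expect[V_\lambda]$ from Theorem~\ref{thm-prophet} and rearrange: $(\lambda+1)^2\,\expect[V_\lambda] \geq (\lambda+1)\,\expect[V_0]$, hence $\expect[V_0]/\expect[V_\lambda] \leq \lambda+1$. (The paper presents this via the ratios $\Delta = \expect[V^*]/\expect[V_0]$ and $\beta = \expect[V^*]/\expect[V_\lambda]$ and a min--max over $\Delta$, but it is the same computation.) No backward induction, no custom threshold: the ingredient you were missing is that $\pi_0$ already has loss controlled by $\expect[V^*]-\expect[V_0]$, and Theorem~\ref{thm-prophet} is precisely what eliminates $\expect[V^*]$ from the resulting inequality.
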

\begin{proof}
We denote by $\pi_\lambda$ the optimal $\lambda$-biased stopping rule and let $\Delta =\frac{\expect[V^*]}{\expect[V_0]}$ and $ \beta =\frac{\expect[V^*]}{\expect[V_\lambda]}$. We would like to bound $\frac{\beta}{\Delta} = \frac{\expect[V_0]}{\expect[V_\lambda]}$. To do so, we will get an upper bound on $\beta$ as a function of $\Delta$. We obtain this upper bound by lower bounding the utility of the $\lambda$-biased agent by using its utility from applying the optimal (unbiased) stopping rule (i.e., $\expect[V(\pi_{\lambda})]-\lambda \expect[L(\pi_\lambda)] \geq \expect[V(\pi_{0})]-\lambda \expect[L(\pi_{0})]$).  
First observe the following bound on the expected loss of using the optimal stopping rule:
\begin{align*}
\expect[L(\pi_0)] \leq \expect[V^*- V_0] = \expect[V^*]-\expect[V_0]
\end{align*}
 Notice that using our notation we have that $\expect[V_0] = \frac{\expect[V^*]}{\Delta}$. Thus, we get that $\expect[L(\pi_0)]\leq\expect[V^*] -\frac{\expect[V^*]}{\Delta} = \expect[V^*](1-\frac{1}{\Delta}) $. Hence,
\begin{align*}
\expect[V(\pi_{\lambda})]-\lambda \expect[L(\pi_\lambda)] \geq \expect[V(\pi_{0})]-\lambda \expect[L(\pi_{0})]  \geq  \frac{1}{\Delta} \expect[V^*] - \lambda \left(1-\frac{1}{\Delta} \right)\expect[V^*]
\end{align*}
On the other hand: $\expect[V(\pi_{\lambda})]-\lambda \expect[L(\pi_\lambda)] \leq \expect[V(\pi_{\lambda})] = \frac{\expect[V^*]}{\beta}$. This implies that:
\begin{align*}
\frac{1}{\beta} \geq \frac{1}{\Delta}  - \lambda \left(1-\frac{1}{\Delta} \right) \implies \beta \leq \frac{\Delta}{1-\lambda(\Delta-1)}
\end{align*}
By applying the bound on $\beta$ we proved in Theorem \ref{thm-prophet} we have that $\beta \leq \lambda+2$. Thus,
\begin{align*}
\dfrac{\expect[V_0]}{\expect[V_\lambda]} = \frac{\beta}{\Delta} \leq \min \left\{  \frac{1}{1-\lambda(\Delta-1)}, \frac{\lambda+2}{\Delta} \right\}
\end{align*}
The minimum is maximized when:
\begin{align*}
\frac{1}{1-\lambda(\Delta-1)} =  \frac{\lambda+2}{\Delta} \implies \Delta = \frac{\lambda+2}{\lambda+1}
\end{align*}
and for this value of $\Delta$ we get that $\frac{\expect[V_0]}{\expect[V_\lambda]}\leq \lambda+1$.
\end{proof}

\section{Monotonicity} \label{sec-mon} In this section we prove several monotonicity properties as the parameters of the model vary. This includes showing that the agent's expected utility monotonically decreases as the reference value increases. This enables us to show that the optimal stopping rule can be described as a threshold strategy, in which the threshold for each step is history dependent. Somewhat surprisingly the expected value of the selected candidate may actually increase when the reference value increases.

We also show that both the agent's expected utility and the expected value of the selected candidate are decreasing as $\lambda$ increases. Finally, we consider the effect of adding more candidates on the agent's expected utility and the expected value of the selected candidate. For rational agents it is clear that
additional candidates can only increase the expected value of the selected candidate. However, for biased agents we observe a  significant different effect that is based on the new candidates' location in the sequence. If the candidates are added in the beginning of the sequence then the agent's expected utility and the expected value of the selected candidate can decrease by as much as a factor of $\lambda+1$. On the other hand, adding candidates to the end of the sequence can only increase the agent's expected utility and the expected value of the candidate that will be selected.

\subsection{Monotonicity in the Reference Value and Threshold Strategies}
In Section \ref{sec-model} we mention that the optimal $\lambda$-biased stopping rule can also be described as a sequence of thresholds such that each threshold is a function of the reference value. Before presenting the proof we prove an auxiliary observation that the expected utility of the agent monotonically decreases as the reference value increases.

\begin{observation} \label{obs-mon-best}
	For $v'<v$ and any $t\geq 1$ we have that $U_\lambda[v',t] \geq U_\lambda[v,t]$.
\end{observation}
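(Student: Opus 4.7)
The plan is to proceed by backward induction on $t$, using the recursive Bellman formula for $U_\lambda[v,t]$ given in Section~\ref{sec-model}. The underlying intuition is that a larger reference value $v$ can only inflate the loss term $\lambda(v - v_t)^+$ in the present and, because the running maximum is monotone in the starting reference, in all future rounds as well; none of this can ever help the agent. The goal is to turn this intuition into a pointwise (in $v_t$) comparison inside the expectation, which then integrates up to the claim.

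For the base case $t=n$, the agent is forced to select, so $U_\lambda[v,n] = \expect_{v_n\sim\mathcal F_n}[v_n - \lambda(v-v_n)^+]$. Since $x \mapsto (x - v_n)^+$ is non-decreasing, $v' < v$ yields $(v'-v_n)^+ \leq (v-v_n)^+$ for every realization of $v_n$, and taking expectations gives $U_\lambda[v',n] \geq U_\lambda[v,n]$.

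For the inductive step, assume $U_\lambda[v',t+1] \geq U_\lambda[v,t+1]$ for all $v' < v$, and fix $v' < v$. Using the Bellman recursion, it suffices to show that for every realized $v_t$,
\begin{align*}
\max\bigl\{U_\lambda[v'\lor v_t,t+1],\; v_t - \lambda(v'-v_t)^+\bigr\} \;\geq\; \max\bigl\{U_\lambda[v\lor v_t,t+1],\; v_t - \lambda(v-v_t)^+\bigr\}.
\end{align*}
I plan to verify this by splitting on where $v_t$ sits relative to $v'$ and $v$. If $v_t \geq v$, both maxima have identical entries. If $v' \leq v_t < v$, then $v'\lor v_t = v_t \leq v = v\lor v_t$, so the first entries compare by the inductive hypothesis, while the second entries satisfy $v_t - \lambda(v'-v_t)^+ = v_t \geq v_t - \lambda(v-v_t)^+$. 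If $v_t < v'$, then $v'\lor v_t = v' < v = v\lor v_t$, so the inductive hypothesis again handles the first entry, and $(v'-v_t)^+ = v'-v_t < v - v_t = (v-v_t)^+$ handles the second.

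In each case, both entries of the left-hand max dominate the corresponding entries of the right-hand max, so the inequality of maxima holds pointwise in $v_t$. Taking $\expect_{v_t\sim\mathcal F_t}$ yields $U_\lambda[v',t] \geq U_\lambda[v,t]$, completing the induction. The main subtlety is to keep the case split clean enough that both the continuation-value entry (controlled by induction via monotonicity of $v \mapsto v \lor v_t$) and the acceptance-value entry (controlled by monotonicity of $(\cdot)^+$) are simultaneously handled; everything else is bookkeeping.
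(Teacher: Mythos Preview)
Your proof is correct. The backward induction on $t$ via the Bellman recursion works: the base case is immediate from monotonicity of $x\mapsto (x-v_n)^+$, and in the inductive step your three-case split cleanly shows that both the continuation entry and the acceptance entry of the max are pointwise no smaller under the smaller reference value, so the inequality of maxima holds for every realization of $v_t$ and integrates to the claim. One small note on the inductive hypothesis: you phrase it as ``assume $U_\lambda[v',t+1]\ge U_\lambda[v,t+1]$ for all $v'<v$'' and then in the middle case you invoke it with the pair $(v_t,v)$ rather than $(v',v)$. This is fine since the statement being proved by induction is the full monotonicity in the first argument at step $t+1$, but it would be slightly cleaner to state the hypothesis as ``for all $w'<w$, $U_\lambda[w',t+1]\ge U_\lambda[w,t+1]$'' to make the quantification unambiguous.

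That said, the paper's own proof takes a different and shorter route: a one-line policy-substitution (coupling) argument. The agent starting with reference $v'<v$ can simply adopt the stopping rule that is optimal for reference $v$; under the same realization of candidate values, this selects the same candidate, and since the reference $v'$ is smaller the loss term $\lambda(\max - \text{selected})^+$ can only be smaller, so the resulting utility is at least $U_\lambda[v,t]$. Optimizing over policies then gives $U_\lambda[v',t]\ge U_\lambda[v,t]$. The advantage of the paper's argument is brevity and conceptual clarity (it makes the ``more reference can only hurt'' intuition literal). The advantage of your approach is that it stays entirely inside the dynamic program and avoids reasoning about running policies on ``wrong'' reference values; it also gives, as a byproduct, the pointwise-in-$v_t$ domination that could be reused elsewhere.
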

\begin{proof}
	To see why this is the case observe that an agent with reference value $v'<v$ can use the optimal $\lambda$-biased stopping rule for reference value $v$. The expected utility of the agent for using this stopping rule is at least $U_\lambda[v,t]$ since the expected value of the candidate selected will be the same and the expected loss will only be smaller.
\end{proof}


Interestingly, it is not necessarily the case that as the reference value increases the expected value of the selected candidate decreases. To see that, consider the following example: $V_1=1$ and $V_2$ is $3$ with probability $\frac12$ and $0$ with probability $\frac12$. An agent that has at the beginning a reference value of $2$ will choose candidate $2$ since $1-\lambda(2-1) < \frac32-\frac12 \cdot \lambda (2-0)$ for any $\lambda\geq 0$. On the other hand, for $\lambda>1$ an agent with reference value $0$ will choose candidate 1 since $1 > \frac32-\frac12 \lambda (2-1)$. As a result the expected value of the selected candidate will decrease from $3/2$ to $1$.

We are now ready to define the optimal $\lambda$-biased stopping rule as a threshold strategy.
\begin{proposition} \label{prop-th}
	For every $t\geq 1$ and $v\geq 0$ the optimal $\lambda$-biased stopping rule selects candidate $t$ when the reference value is $v$ if and only if $v_t \geq \theta(v,t)$ such that:
	\begin{align*}
	\theta(v,t) = \begin{cases}
	\frac{U_\lambda[v,t+1] +\lambda v}{1+\lambda}, & \text{for } v>U_\lambda[v,t+1] \\
	u, & \text{for } v\leq U_\lambda[v,t+1]
	\end{cases}
	\end{align*}
	where $u\geq v$  is the minimal value for which $u\geq U_\lambda[u,t+1]$.
\end{proposition}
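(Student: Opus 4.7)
The plan is to read the optimal selection rule directly off the Bellman equation given in Section~\ref{sec-model} and split into two regimes according to whether $v_t \geq v$ or $v_t < v$. The Bellman recursion says the agent selects candidate $t$ precisely when $v_t - \lambda(v-v_t)^+ \geq U_\lambda[v\lor v_t, t+1]$. When $v_t \geq v$ this collapses to the fixed-point-style condition $v_t \geq U_\lambda[v_t, t+1]$, and when $v_t < v$ it rearranges to the linear condition $v_t \geq \frac{U_\lambda[v,t+1] + \lambda v}{1+\lambda}$. All of the work is in matching these two conditions with the two cases in the statement.

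For Case~A ($v > U_\lambda[v,t+1]$), I would first observe the short algebra $v > U_\lambda[v,t+1] \Longleftrightarrow v > \frac{U_\lambda[v,t+1]+\lambda v}{1+\lambda}$, so the candidate threshold $\theta(v,t) = \frac{U_\lambda[v,t+1]+\lambda v}{1+\lambda}$ is strictly below $v$. For $v_t \in [\theta(v,t), v)$ the low regime applies and the linear criterion holds, so the agent selects; for $v_t \geq v$ the high regime applies and Observation~\ref{obs-mon-best} gives the chain $U_\lambda[v_t, t+1] \leq U_\lambda[v, t+1] < v \leq v_t$, so the agent again selects; and for $v_t < \theta(v,t)$ we are still in the low regime but the linear criterion fails, so the agent passes. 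This is exactly the threshold claim in Case~A.

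For Case~B ($v \leq U_\lambda[v,t+1]$), the low-regime criterion demands $v_t \geq \frac{U_\lambda[v,t+1] + \lambda v}{1+\lambda} \geq \frac{v + \lambda v}{1+\lambda} = v$, which contradicts $v_t < v$; so the agent never accepts while $v_t < v$. For $v_t \geq v$ we are in the high regime and must compare $v_t$ with $U_\lambda[v_t, t+1]$. Applying Observation~\ref{obs-mon-best}, the function $g(y) := y - U_\lambda[y, t+1]$ is non-decreasing in $y$, so $\{y \geq v : g(y) \geq 0\}$ is upward-closed and its left endpoint is exactly the value $u$ defined in the statement. Hence the agent selects iff $v_t \geq u$, matching $\theta(v,t) = u$.

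The main delicate point is confirming that $u$ is well-defined in Case~B, which I would address by noting that $U_\lambda[y, t+1]$ is bounded above by the largest value in $\bigcup_{s\geq t+1}\mathrm{supp}(\mathcal F_s)$, so $g(y) \geq 0$ for all sufficiently large $y$, while the monotonicity of $g$ then guarantees a unique minimal crossing point at or above $v$. Everything else is bookkeeping: re-examining the two inequalities that come out of the Bellman condition and invoking Observation~\ref{obs-mon-best} once in each case to handle the region $v_t \geq v$.
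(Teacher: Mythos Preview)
Your proposal is correct and follows essentially the same route as the paper: split on whether $v > U_\lambda[v,t+1]$ or $v \leq U_\lambda[v,t+1]$, handle $v_t < v$ via the linear rearrangement of the Bellman condition and $v_t \geq v$ via Observation~\ref{obs-mon-best}, and identify the threshold in each case. If anything, your treatment is slightly more explicit than the paper's---you spell out the algebra showing $\theta(v,t)<v$ in Case~A, give the contradiction argument for why no $v_t<v$ is accepted in Case~B, and justify the existence of $u$ via boundedness rather than leaving it at monotonicity---but the underlying argument is the same.
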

\begin{proof}
	We distinguish between two cases. First we consider the case that  $v>U_\lambda[v,t+1] $. Roughly speaking, this is the case in
	which the realized values of the candidates were lower than expected, and now to maximize its expected utility the agent may need to incur some loss with respect to the reference value. As can be expected, in this case the agent always selects candidate $t$ with value $v_t\geq v$. This is because by applying Observation \ref{obs-mon-best} we have that
	$v_t\geq v>U_\lambda[v,t+1]\geq U_\lambda[v_t,t+1]$ which implies that $v_t\geq U_\lambda[v_t,t+1]$ and hence should be selected. The agent may also select a candidate such that $v_t <v$. This would be the case when:
	\begin{align*}
	v_t - \lambda(v-v_t) \geq U_\lambda[v,t+1] \implies v_t \geq \frac{U_\lambda[v,t+1] +\lambda v}{1+\lambda}
	\end{align*}
	Observe that $\frac{U_\lambda[v,t+1] +\lambda v}{1+\lambda} < v$ since $U_\lambda[v,t+1]<v$, hence we conclude that in this case candidate $t$ will be selected if and only if $v_t>\frac{U_\lambda[v,t+1] +\lambda v}{1+\lambda}$.

	We next consider the case that $v\leq U_\lambda[v,t+1] $. This is the case for the first candidate, for example, when the reference value is $0$. It is also clear that in this case the agent would never select a candidate with value strictly less than $v$.
	Hence, candidate $t$ will be selected if and only if
	\begin{align*}
	v_t \geq U_\lambda[v_t,t+1].
	\end{align*}
	Particularly, a candidate of value $v_t \geq u\geq v$ where $u$ is the minimal value such that $u\geq U_\lambda[u,t+1]$ will always be selected.
	Notice that, such a value necessarily exists since the left hand-side of the inequality $u \geq U_\lambda[u,t+1]$ is increasing with $u$ and by Observation \ref{obs-mon-best} the right hand-side is decreasing with $u$. Also, from Observation \ref{obs-mon-best} we get that the agent will select any candidate such that $v_t \geq u$ since $v_t \geq u\geq U_\lambda[u,t+1] \geq U_\lambda[v_t,t+1]$.
\end{proof}

\subsection{Monotonicity in $\lambda$}
We show that the expected utility of the agent is decreasing as $\lambda$ increases.
\begin{observation} \label{obs-mon-lambda}
	For any reference value $v\geq0, t\geq 1$ and $\lambda'>\lambda\geq 0$ we have that $U_\lambda[v,t] \geq U_{\lambda'}[v,t]$.
\end{observation}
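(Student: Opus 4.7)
The plan is to prove this by a short suboptimality (policy-transfer) argument, mirroring the structure of the proof of Observation \ref{obs-mon-best}. Let $\pi_{\lambda'}$ denote the optimal $\lambda'$-biased stopping rule starting from time $t$ with reference value $v$, and recall the identity $U_{\lambda'}[v,t] = \expect[V(\pi_{\lambda'}) - \lambda' \cdot L(\pi_{\lambda'})]$, where $V(\cdot)$ and $L(\cdot)$ denote the value and loss random variables from Section \ref{sec-bound}. The key observation is that the $\lambda$-biased agent is free to use the same rule $\pi_{\lambda'}$; its utility from doing so is $\expect[V(\pi_{\lambda'}) - \lambda \cdot L(\pi_{\lambda'})]$.

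The first step is to note that $L(\pi_{\lambda'}) \geq 0$ pointwise, simply because $L$ is defined as a nonnegative difference. Combined with $\lambda < \lambda'$, this yields the inequality
\begin{align*}
\expect[V(\pi_{\lambda'}) - \lambda \cdot L(\pi_{\lambda'})] \;\geq\; \expect[V(\pi_{\lambda'}) - \lambda' \cdot L(\pi_{\lambda'})] \;=\; U_{\lambda'}[v,t].
\end{align*}
The second step is to invoke optimality of $U_\lambda[v,t]$ among all stopping rules (including $\pi_{\lambda'}$), which gives
\begin{align*}
U_\lambda[v,t] \;\geq\; \expect[V(\pi_{\lambda'}) - \lambda \cdot L(\pi_{\lambda'})].
\end{align*}
Chaining the two inequalities delivers $U_\lambda[v,t] \geq U_{\lambda'}[v,t]$, which is exactly the claim.

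There is no real obstacle here; the only subtlety is to be careful that the comparison is between the utilities under the respective bias parameters (not between values of selected candidates, which the authors remark is more delicate), and to make explicit the fact that the same measurable stopping rule can be applied under either bias parameter since neither $V(\cdot)$ nor $L(\cdot)$ depends on $\lambda$. If one prefers a Bellman-equation-based presentation, the same argument can be packaged as a backward induction on $t$: assume the inequality at time $t+1$ for every reference value, and then for each realization of $v_t$ compare the two branches of the $\max$ in the recursion termwise, using $\lambda < \lambda'$ on the ``select'' branch and the inductive hypothesis on the ``continue'' branch. I would lead with the policy-transfer proof since it is shorter and avoids case analysis.
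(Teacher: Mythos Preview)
Your proposal is correct and follows essentially the same policy-transfer argument as the paper: have the $\lambda$-biased agent adopt the optimal $\lambda'$-biased rule, observe that the loss term is nonnegative so replacing $\lambda'$ by $\lambda$ only raises utility, and then appeal to optimality of $U_\lambda[v,t]$. Your write-up is slightly more explicit (and the backward-induction remark is a nice alternative formulation), but there is no substantive difference in approach.
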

\begin{proof}
	Recall that $U_{\lambda}[v,t]$ is the expected utility of a $\lambda$-biased agent with reference value $v$ that is now examining candidate $t$. Instead of using its optimal stopping rule the $\lambda$-biased agent can apply the optimal stopping rule for a $\lambda'$-biased agent. By doing so it will achieve utility of at least $U_{\lambda'}[v,t]$ since both agents will select the same candidate and the loss of the $\lambda$-biased agent will be smaller since $\lambda<\lambda'$.
\end{proof}

The decrease in the agent's expected utility can be merely an artifact of the greater loss that the agent exhibits as we increase the bias parameter. However, we show that an additional reason for the decrease in the utility is that a candidate with lower expected value may be selected. To prove this we show that the optimal $\lambda'$-biased stopping rule for $\lambda'>\lambda$ would select either the same candidate selected by an optimal $\lambda$-biased stopping rule or a candidate prior to it in the sequence. We formally define the following notion:
\begin{definition}
	A stopping rule $\pi$ is \emph{more patient} than a stopping rule $\pi'$ if for every realization of the candidates the stopping rule $\pi$ either selects the same candidate as $\pi'$ or selects a candidate that is later in the sequence. In other words, if the stopping rule $\pi'$ selected candidate $t'$ then $\pi$ selects a candidate $t\geq t'$.
\end{definition}
Notice that if both $\pi$ and $\pi'$ are threshold stopping rules and are defined on the same sequence of candidates, then if $\pi$ is more patient than $\pi'$ then for any $t$ and any realization $v_1,\ldots, v_{t-1}$ the threshold of $\pi$ for candidate $t$ will be greater than or equal to the threshold of $\pi'$.

The usefulness of this notion comes from the next proposition showing that, under some conditions, a stopping rule which is more patient would select candidates of higher expected values.
\begin{proposition} \label{prop-more-patient}
	If an optimal $\lambda$-biased stopping rule $\pi_{\lambda}$ is more patient than another stopping rule $\pi$ then the expected value of the candidate selected by $\pi_\lambda$ is higher than the expected value of the candidate selected by $\pi$ (i.e., $\expect[V(\pi_\lambda)]\geq \expect[V(\pi)]$).
\end{proposition}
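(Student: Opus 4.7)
The plan is to combine two inequalities: a pointwise identity controlling the sum $V+L$ under the ``more patient'' relation, and the defining optimality inequality for $\pi_\lambda$. First I would observe the identity
\[
V(\pi) + L(\pi) \;=\; \max\{V(\pi), R(\pi)\} \;=\; \max_{s \le \tau(\pi)} v_s,
\]
where $\tau(\pi)$ is the index of the candidate selected by $\pi$ and $R(\pi)=\max_{s<\tau(\pi)} v_s$ (with $R(\pi)=0$ when $\tau(\pi)=1$) is the reference value at the moment of selection. This is immediate from the definition $L(\pi) = (R(\pi) - V(\pi))^+$. Because $\pi_\lambda$ is more patient than $\pi$, we have $\tau(\pi_\lambda) \ge \tau(\pi)$ on every realization, so the running maximum over the longer prefix dominates the one over the shorter prefix, giving $V(\pi_\lambda)+L(\pi_\lambda) \ge V(\pi)+L(\pi)$ pointwise. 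Taking expectations yields
\[
\expect[V(\pi_\lambda)] - \expect[V(\pi)] \;\ge\; \expect[L(\pi)] - \expect[L(\pi_\lambda)].
\]

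Second, I would invoke the optimality of $\pi_\lambda$: by definition it maximizes the $\lambda$-biased utility $\expect[V(\cdot)-\lambda L(\cdot)]$ over all stopping rules, so in particular
\[
\expect[V(\pi_\lambda)] - \expect[V(\pi)] \;\ge\; \lambda\bigl(\expect[L(\pi_\lambda)] - \expect[L(\pi)]\bigr).
\]

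Finally, I would combine the two inequalities by a short case analysis. Write $x=\expect[V(\pi_\lambda)]-\expect[V(\pi)]$ and $y=\expect[L(\pi_\lambda)]-\expect[L(\pi)]$. The two bounds above read $x \ge -y$ and $x \ge \lambda y$. If $y \ge 0$, then $x \ge \lambda y \ge 0$; if $y<0$, then $x \ge -y > 0$. Either way $x \ge 0$, which is the desired conclusion $\expect[V(\pi_\lambda)] \ge \expect[V(\pi)]$. I do not anticipate a serious obstacle; the key conceptual move is recognizing that $V+L$ equals the running maximum at the stopping time, after which everything reduces to a two-line case split that uses only $\lambda \ge 0$ and the fact that $\pi$ is a feasible (not necessarily optimal) stopping rule so that the biased-optimality inequality applies to it.
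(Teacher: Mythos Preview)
Your proof is correct and takes a genuinely different route from the paper's. The paper conditions on the history $v_1,\ldots,v_t$ at the time $\pi$ stops and argues, by contradiction, that $\expect[V(\pi_\lambda)\mid v_1,\ldots,v_t]\ge v_t$ for every such history: if the conditional expected value dropped below $v_t$, then (because $\pi_\lambda$ stops no earlier, so the reference can only grow) the conditional expected loss would have to rise, contradicting the optimality inequality applied at that history. You instead work entirely at the level of unconditional expectations, using the clean identity $V(\sigma)+L(\sigma)=\max_{s\le\tau(\sigma)} v_s$ to convert ``more patient'' into a pointwise domination of $V+L$, and then combine the resulting bound $x\ge -y$ with the global optimality bound $x\ge \lambda y$ via a sign split on $y$. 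Your argument is shorter and avoids conditioning; the paper's argument, while slightly more involved, actually yields the stronger conditional statement $\expect[V(\pi_\lambda)\mid v_1,\ldots,v_t]\ge v_t$ (which, however, is not used anywhere else in the paper).
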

\begin{proof}
	Since $\pi_{\lambda}$ is more patient, then for every realization $v_1,\ldots,v_t$ such that $\pi$ selects candidate $t$,  $\pi_{\lambda}$ selects candidate $t$ or a candidate that will arrive later. Let $V(\pi_\lambda|v_1,\ldots,v_t)$ and $L(\pi_\lambda|v_1,\ldots,v_t)$ be two random variables that are equal respectively to the value and loss of the optimal $\lambda$-biased stopping rule for the realization $v_1,\ldots,v_t$.
	Since $\pi_\lambda$ is optimal for a $\lambda$-biased agent, we have that:
		\begin{align} \label{eq-opt}
	\expect[V(\pi_\lambda|v_1,\ldots,v_t)] - \lambda \expect[L(\pi_\lambda|v_1,\ldots,v_t)] \geq v_t - \lambda(\max\{v_1,\ldots,v_{t-1}\}-v_t)^+
	\end{align}
	We claim that for every realization $v_1,\ldots,v_t$ it has to be the case that $\expect[V(\pi_\lambda|v_1,\ldots,v_t)] \geq v_t$. This will in turn imply the proposition. Assume towards contradiction that there exists some realization for which  $\expect[V(\pi_\lambda|v_1,\ldots,v_t)] < v_t$. By Equation (\ref{eq-opt}), this implies that $\expect[L(\pi_\lambda|v_1,\ldots,v_t)] < (\max\{v_1,\ldots,v_{t-1}\}-v_t)^+$. The expected loss is determined by the difference between the reference value and the value of the selected candidate. Since $\pi_{\lambda}$ is more patient than $\pi$  the reference value at the time of the selection may only increase. Thus, if the expected value of the selected candidate decreases the expected loss would increase.
%
%
\end{proof}

Next, we show that as the value of $\lambda$ decreases the optimal $\lambda$-biased stopping rule becomes more patient.

\begin{claim}
The optimal $\lambda$-biased stopping rule is more patient than the optimal $\lambda'$-biased stopping rule for $\lambda \leq \lambda'$.
\end{claim}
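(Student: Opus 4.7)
The plan is to prove the claim by backward induction on $t$, using the strengthened hypothesis
\begin{equation*}
(1+\lambda')\,U_\lambda[v,t]\; - \;(1+\lambda)\,U_{\lambda'}[v,t]\;\geq\;(\lambda'-\lambda)\,v \qquad (\star)
\end{equation*}
for all reference values $v \geq 0$ and all steps $t$, whenever $\lambda \leq \lambda'$. The motivation for $(\star)$ is that it is exactly the algebraic content of threshold monotonicity: cross-multiplying $\theta_\lambda(v,t) \geq \theta_{\lambda'}(v,t)$ in the regime $v_t < v$ where Proposition~\ref{prop-th} gives $\theta_\mu(v,t) = (U_\mu[v,t+1]+\mu v)/(1+\mu)$ yields precisely $(\star)$ at $t+1$. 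Once $(\star)$ is established at every state, threshold monotonicity follows in that regime; in the other regime ($v \leq U_\lambda[v,t+1]$), where $\theta_\mu(v,t)$ is the smallest $u\geq v$ with $u \geq U_\mu[u,t+1]$, it already follows from $U_\lambda \geq U_{\lambda'}$ (Observation~\ref{obs-mon-lambda}). Since $\pi_\lambda$ and $\pi_{\lambda'}$ observe the same candidates and thus share the same reference value until one of them stops, pointwise threshold monotonicity forces $\pi_\lambda$ never to stop strictly earlier than $\pi_{\lambda'}$, which is the patience claim.

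The base case at $t=n$ reduces $(\star)$ to $\expect[V_n] + \expect[(v-V_n)^+] \geq v$, i.e. $\expect[V_n\vee v]\geq v$, which is immediate. For the inductive step I would expand $U_\lambda[v,t]$ via the Bellman recursion and verify $(\star)$ pointwise in $v_t$ inside the expectation, doing case analysis on whether $v_t\geq v$ and on which branch (stop vs.\ continue) each of the two agents takes. Most combinations reduce either to $(\star)$ at $t+1$ (when both continue) or to a short algebraic identity (when both stop, or in the ``correct'' ordering where $\pi_\lambda$ continues while $\pi_{\lambda'}$ stops). The delicate case is the ``forbidden'' combination where $\pi_\lambda$ stops while $\pi_{\lambda'}$ continues at some $v_t<v$: combining the two decision inequalities $(1+\lambda)v_t - \lambda v \geq U_\lambda[v,t+1]$ and $U_{\lambda'}[v,t+1] > (1+\lambda')v_t - \lambda' v$ produces $(1+\lambda')U_\lambda[v,t+1] - (1+\lambda)U_{\lambda'}[v,t+1] < (\lambda'-\lambda)v$, contradicting $(\star)$ at $t+1$. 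So this combination cannot occur --- which is simultaneously threshold monotonicity at step $t$ and what is needed to propagate $(\star)$.

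The main obstacle is identifying $(\star)$ as the right inductive invariant. The ``obvious'' hypothesis $U_\lambda \geq U_{\lambda'}$ is not strong enough on its own, because in the threshold formula the decrease of $U_\mu$ in $\mu$ and the additive $+\mu v$ shift pull the threshold in opposite directions, so a pointwise comparison of $U$ values alone does not translate into a comparison of thresholds. The specific linear combination in $(\star)$ is precisely the one in which these two opposing effects cancel, which is why it is strong enough to carry the threshold ordering across each step; once this strengthening is in hand, the remaining case-by-case bookkeeping is routine.
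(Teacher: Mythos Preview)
Your approach is correct and genuinely different from the paper's. The paper argues directly by contradiction at the level of stopping rules: it supposes $\pi_\lambda$ stops at some step $t$ while $\pi_{\lambda'}$ continues, invokes optimality of $\pi_\lambda$ against the alternative of following $\pi_{\lambda'}$ from step $t$ onward to obtain
\[
v_t - \lambda(v-v_t)^+ \;\geq\; \expect[V(\pi_{\lambda'}\mid v_1,\ldots,v_t)] - \lambda\,\expect[L(\pi_{\lambda'}\mid v_1,\ldots,v_t)],
\]
and then proves the key loss comparison $(v-v_t)^+ \leq \expect[L(\pi_{\lambda'}\mid v_1,\ldots,v_t)]$ (because the reference can only grow after step $t$). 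Scaling this last inequality by $\lambda'-\lambda$ and adding yields the $\lambda'$-utility inequality, forcing $\pi_{\lambda'}$ to stop at $t$ as well. No induction, no value-function invariant, and no appeal to the threshold characterization are needed.

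Your route instead stays inside the dynamic program: the strengthened invariant $(\star)$ is exactly what is needed to compare the regime-1 thresholds of Proposition~\ref{prop-th}, and you correctly note that $U_\lambda \geq U_{\lambda'}$ handles the regime-2 (and mixed-regime) threshold comparison. Two minor points you gloss over but which go through: (i) the ``forbidden'' combination when $v_t \geq v$ is also ruled out by $(\star)$ at $t+1$ (or by Observation~\ref{obs-mon-lambda}), not just the $v_t<v$ sub-case you wrote out; (ii) the regime split in Proposition~\ref{prop-th} is determined by $v$ versus $U_\mu[v,t+1]$, not by $v_t$ versus $v$, so the phrasing ``in the regime $v_t<v$'' should really be ``when both agents are in regime~1.'' What each approach buys: the paper's argument is shorter and more conceptual, working directly with the stopping rules and the loss random variable rather than the value function. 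Yours is more mechanical but yields $(\star)$ as a quantitative byproduct relating $U_\lambda$ and $U_{\lambda'}$, which is not visible from the paper's argument and may be of independent use.
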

\begin{proof}
	Assume towards contradiction that there exists some realization $v_1,\ldots, v_{t}$ such that $\pi_{\lambda}$ selects candidate $t$ and $\pi_{\lambda'}$ selects candidate $t'$ such that $t'>t$. We show that in this case a $\lambda'$-biased agent can increase its expected utility by instead selecting candidate $t$ on any such realization.
	Let $v = \max \{v_1,\ldots,v_{t-1}\}$ denote the reference value. 
	As in the proof of Proposition \ref{prop-more-patient}, let $V(\pi_{\lambda'}|v_1,\ldots,v_t)$ and $L(\pi_{\lambda'}|v_1,\ldots,v_t)$ be two random variables that are equal respectively to the value and loss of the stopping rule $\pi_{\lambda'}$ for the realization $v_1,\ldots,v_t$.
	Since $\pi_{\lambda}$ is optimal for a $\lambda$-biased agent we have that the utility for following it is greater than using $\pi_{\lambda'}$ instead:
	\begin{align} \label{eq-patient}
	v_t - \lambda(v-v_t)^+ \geq \expect[V(\pi_{\lambda'}|v_1,\ldots,v_t)] - \lambda \expect[L(\pi_{\lambda'}|v_1,\ldots,v_t)]
	\end{align}

	We show that $(v-v_t)^+ \leq \expect[L(\pi_{\lambda'}|v_1,\ldots,v_t)]$. By Inequality (\ref{eq-patient}) we have that this has to be the case if $v_t \leq\expect[V(\pi_{\lambda'}|v_1,\ldots,v_t)]$. Hence, we are left with handling the case that $v_t > \expect[{V}(\pi_{\lambda'}|v_1,\ldots,v_t)]$. Now, observe that $\expect[L(\pi_{\lambda'}|v_1,\ldots,v_t)] \geq v- \expect[V(\pi_{\lambda'}|v_1,\ldots,v_t)]$. The reason for this is twofold: first the reference value may increase above $v$ and second in the expression $v- \expect[V(\pi_{\lambda'}|v_1,\ldots,v_t)]$, all the events in which the selected candidate has a higher value than $v$ will contribute a negative number. Putting this together with the assumption that $v_t > \expect[V(\pi_{\lambda'}|v_1,\ldots,v_t)]$ we get that $\expect[L(\pi_{\lambda'}|v_1,\ldots,v_t)] \geq v- \expect[V(\pi_{\lambda'}|v_1,\ldots,v_t)] \geq v-v_t$. Since $\expect[L(\pi_{\lambda'}|v_1,\ldots,v_t)]\geq 0$ we get that  $\expect[L(\pi_{\lambda'}|v_1,\ldots,v_t)]\geq (v-v_t)^+$ in this case as well.

	To complete the proof, we add the inequality  $-(\lambda'-\lambda)(v-v_t)^+ \geq -(\lambda'-\lambda)\expect[L(\pi_{\lambda'}|v_1,\ldots,v_t)]$ to Inequality (\ref{eq-patient}) and get
	\begin{align*}
	v_t - \lambda'(v-v_t)^+ \geq \expect[{V}(\pi_{\lambda'}|v_1,\ldots,v_t)] - \lambda' \cdot \expect[L(\pi_{\lambda'}|v_1,\ldots,v_t)].
	\end{align*}
	A contradiction is reached since this implies that $\pi_{\lambda'}$ should also select candidate $t$ as well.
\end{proof}

By applying Proposition \ref{prop-more-patient} we reach the following corollary:
\begin{corollary}
	Let $\lambda'\geq \lambda$. The expected value of the candidate selected by a $\lambda$-biased agent is higher than the expected value of a candidate selected by a $\lambda'$-biased agent,
	i.e.~$\expect[V(\pi_{\lambda})]\geq\expect[V(\pi_{\lambda'})]$.
\end{corollary}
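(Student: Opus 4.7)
The plan is a direct two-step composition of the results established immediately above. By the claim just proved, for $\lambda \leq \lambda'$ the optimal $\lambda$-biased stopping rule $\pi_\lambda$ is more patient than $\pi_{\lambda'}$ on every realization of the sequence. Since $\pi_\lambda$ is by definition the optimal stopping rule for a $\lambda$-biased agent, the pair $(\pi_\lambda, \pi_{\lambda'})$ fits exactly into the hypothesis of Proposition~\ref{prop-more-patient}, with $\pi_\lambda$ playing the role of the ``optimal $\lambda$-biased rule'' and $\pi_{\lambda'}$ playing the role of the ``other stopping rule $\pi$.'' Invoking that proposition then gives $\expect[V(\pi_\lambda)] \geq \expect[V(\pi_{\lambda'})]$, which is exactly the statement of the corollary.

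The only point that one has to be careful about---and the nearest thing to an obstacle---is the asymmetry built into Proposition~\ref{prop-more-patient}: its conclusion requires not merely that one stopping rule be more patient than another, but that the more patient rule be an optimal biased stopping rule (this is what enables the optimality inequality~(\ref{eq-opt}) inside that proof). In the present instance this is automatic, because the more patient of the two rules is $\pi_\lambda$, which is by construction the optimal $\lambda$-biased stopping rule for its own bias parameter. Thus no additional argument is needed, and the corollary follows by chaining the two preceding results.
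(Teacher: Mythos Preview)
Your proposal is correct and follows the same approach as the paper: the corollary is stated there as an immediate consequence of Proposition~\ref{prop-more-patient} applied to the claim that $\pi_\lambda$ is more patient than $\pi_{\lambda'}$. Your additional remark about the asymmetry in the hypotheses of Proposition~\ref{prop-more-patient} (that the more patient rule must itself be optimal for some bias parameter) is a useful clarification that the paper leaves implicit.
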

\subsection{Monotonicity and Non-Monotonicity in the Number of Candidates}
Interestingly, monotonicity in the number of candidates --- which is trivial for unbiased agents (i.e., free disposal) ----- does not necessary hold for biased agents with $\lambda>0$. In particular, we observe that while adding more candidates to the end of the sequence can only increase the expected utility (the argument is similar to unbiased agents), adding more candidates to the beginning of the sequence can actually reduce the expected utility. We begin by showing that adding candidates at the end of the sequence can only increase the expected utility of the agent.

In this section, because we are contemplating varying the sequence of candidates, we modify our notation for optimal $\lambda$-biased stopping rules to explicitly indicate the sequence for which the stopping rule is optimized. Thus, $\pi_\lambda(V_1,\ldots, V_n)$ denotes the optimal $\lambda$-biased stopping rule for the sequence of candidates
$V_1,\ldots,V_n$, and
$U_\lambda(\pi_\lambda(V_1,\ldots, V_n))$ denotes the expected utility of a $\lambda$-biased agent using it. Similar to before, $V(\pi_\lambda(V_1,\ldots, V_n))$ is a random variable that equals the expected value of the candidate selected by $\pi_\lambda(V_1,\ldots, V_n)$.

\begin{claim} \label{clm-mon-more-end}
For any $\lambda\geq 0$ we have that $U_\lambda(\pi_\lambda(V_1,\ldots,V_{n+1})) \geq U_\lambda(\pi_\lambda(V_1,\ldots,V_n))$ and
\\ ${\expect[V(\pi_\lambda(V_1,\ldots, V_n,V_{n+1}))] \geq \expect[V(\pi_\lambda(V_1,\ldots, V_n))]}$.
\end{claim}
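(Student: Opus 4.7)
The plan is to prove both inequalities by comparing the dynamic programs for the $n$-candidate and $(n+1)$-candidate problems. Write $U^{(n)}_\lambda[v,t]$ and $U^{(n+1)}_\lambda[v,t]$ for the respective DP values from Section~\ref{sec-model}. For the utility statement I will first establish the pointwise inequality $U^{(n+1)}_\lambda[v,t] \geq U^{(n)}_\lambda[v,t]$ for every $v \geq 0$ and every $1 \leq t \leq n$ by backward induction on $t$; specializing at $v=0$ and $t=1$ then yields $U_\lambda(\pi_\lambda(V_1,\ldots,V_{n+1})) \geq U_\lambda(\pi_\lambda(V_1,\ldots,V_n))$. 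The only non-trivial step is the base case $t=n$: in the $n$-candidate problem the agent is forced to accept $V_n$, giving $\expect[V_n - \lambda(v-V_n)^+]$, whereas in the $(n+1)$-candidate problem the agent may continue, so the DP value is an expected maximum of two quantities one of which equals the $n$-candidate value. The inductive step for $t<n$ is immediate because both DPs share the same recursion and $\max$ is monotone in the continuation value.

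For the expected-value statement I will apply Proposition~\ref{prop-more-patient} with $\pi_\lambda := \pi_\lambda(V_1,\ldots,V_{n+1})$ and $\pi := \pi_\lambda(V_1,\ldots,V_n)$ extended to the $(n+1)$-sequence by ignoring $V_{n+1}$ (the extended rule is still forced to select $V_n$ if nothing earlier was taken). The conclusion of that proposition is exactly the desired inequality, so the task reduces to showing that $\pi_\lambda(V_1,\ldots,V_{n+1})$ is more patient than this extended $n$-rule. I carry this out via the threshold characterization of Proposition~\ref{prop-th}: in both branches of the defining formula, $\theta(v,t)$ is non-decreasing as the continuation value $U_\lambda[\cdot,t+1]$ increases pointwise. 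Combining this observation with Part~1 gives $\theta^{(n+1)}(v,t) \geq \theta^{(n)}(v,t)$ at every $(v,t)$ with $t<n$; at $t=n$ the extended $n$-rule has effective threshold $0$ (forced selection), while the $(n+1)$-rule's threshold is non-negative, so the comparison persists there as well. Consequently, if on some realization the $(n+1)$-rule selected strictly earlier than the $n$-rule at some time $t$ with running maximum $v^*$, we would have $v_t \geq \theta^{(n+1)}(v^*,t) \geq \theta^{(n)}(v^*,t)$, forcing the $n$-rule to have selected by time $t$ as well --- a contradiction. Hence $\pi_\lambda(V_1,\ldots,V_{n+1})$ is more patient and Proposition~\ref{prop-more-patient} applies.

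The main obstacle I anticipate is the case analysis required to verify that $\theta(v,t)$ is monotone in the continuation value across both branches of Proposition~\ref{prop-th}, especially at the regime transition where $v$ crosses $U_\lambda[v,t+1]$ in the $n$-DP but not in the $(n+1)$-DP; there one must check directly that $\theta^{(n+1)}(v,t) \geq v > \theta^{(n)}(v,t)$. The related boundary subtlety at $t=n$, where the two problems have different horizons, is what produces the gap in Part~1 in the first place and also what drives the threshold comparison at that step; once these edge cases are dispatched, the rest of the argument is routine bookkeeping.
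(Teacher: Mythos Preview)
Your proposal is correct and follows the same overall architecture as the paper: establish the utility inequality first, then show that the $(n{+}1)$-rule is more patient than the (trivially extended) $n$-rule and invoke Proposition~\ref{prop-more-patient} for the expected-value inequality.

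The implementation of each step, however, differs from the paper's. For the utility inequality the paper gives a one-line mimicry argument (the agent facing $V_1,\ldots,V_{n+1}$ can simply run $\pi_\lambda(V_1,\ldots,V_n)$ and ignore $V_{n+1}$), whereas you prove the stronger pointwise domination $U^{(n+1)}_\lambda[v,t] \ge U^{(n)}_\lambda[v,t]$ by backward induction on the DP. For patience, the paper again argues by direct mimicry and optimality: if the $(n{+}1)$-rule stopped at some $t$ while the $n$-rule preferred to continue, then the $(n{+}1)$-rule could also continue, copy the $n$-rule, and do at least as well --- contradicting optimality. You instead go through the threshold formulas of Proposition~\ref{prop-th} and verify case-by-case (including the regime-crossing case and the $t=n$ boundary) that $\theta^{(n+1)}(v,t) \ge \theta^{(n)}(v,t)$. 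Your route requires the pointwise DP inequality from Part~1 as an input, which is why you need the stronger inductive statement rather than just the $v=0,\,t=1$ specialization; the paper's mimicry argument for patience needs no such input. Both approaches are valid: the paper's is shorter and avoids the case analysis entirely, while yours makes the mechanism (higher continuation values $\Rightarrow$ higher thresholds) explicit.
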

\begin{proof}
Observe that a $\lambda$-biased agent facing the sequence $V_1,\ldots,V_n,V_{n+1}$ can use a stopping rule identical to the one an agent facing $V_1,\ldots,V_n$ uses. By doing so its expected utility will be identical to the expected utility of the agent facing $V_1,\ldots,V_n$. Thus,  $U_\lambda(\pi_\lambda(V_1,\ldots,V_{n+1})) \geq U_\lambda(\pi_\lambda(V_1,\ldots,V_n))$. To see why it is also the case that $\expect[V(\pi_\lambda(V_1,\ldots, V_n,V_{n+1}))] \geq \expect[V(\pi_\lambda(V_1,\ldots, V_n))]$ we observe that the optimal $\lambda$-biased stopping rule for  $V_1,\ldots,V_{n+1}$ is more patient than the optimal $\lambda$-biased stopping rule for  $V_1,\ldots,V_{n}$.\footnote{To be completely formal about this, notice that one can trivially extend $\pi_\lambda(V_1,\ldots,V_n)$ to be defined over $V_1,\ldots V_{n+1}$.} To see why this is the case, assume towards contradiction that there exists a
realization $v_1,\ldots,v_t$ such that $\pi_\lambda(V_1,\ldots, V_n,V_{n+1})$ selects candidate $t$ and $\pi_\lambda(V_1,\ldots, V_n)$ selects candidate $t'>t$. Notice that since both stopping rules are optimal for a $\lambda$-biased agent, the utility in each of them for selecting candidate $t$ is the same. Now, if the stopping rule $\pi_\lambda(V_1,\ldots, V_n)$ selects candidate $t'>t$ it means that the expected utility for doing so is greater than the utility of selecting candidate $t$. We reach a contradiction, since $\pi_\lambda(V_1,\ldots, V_n,V_{n+1})$ can guarantee a utility at least as high as the utility of $\pi_\lambda(V_1,\ldots, V_n)$ which means it could not have selected candidate $t$. By Proposition \ref{prop-more-patient} this implies that $\expect[V(\pi_\lambda(V_1,\ldots, V_n,V_{n+1}))] \geq \expect[V(\pi_\lambda(V_1,\ldots, V_n))]$ as required.

\end{proof}

We now consider adding a candidate at the beginning of the sequence. Such addition can decrease the expected utility of a $\lambda$-biased agent. Example \ref{exm-ratio} illustrates this: If the agent only considers the candidate with the expected value of $1$, then the agent's expected utility is $1$. However, if the candidate with the fixed value of $\frac{1}{1+(1-\eps)\lambda}$ is added to the beginning of the sequence, then the expected utility of the $\lambda$-biased agent decreases to $\frac{1}{1+(1-\eps)\lambda}$. Thus, the expected utility decreased by a factor of $\lambda+1$. We show that this ratio is tight:
\begin{claim}
	For any $\lambda\geq 0$ we have that $U_\lambda(\pi_\lambda(V_0,V_1,\ldots,V_{n})) \geq \frac{1}{\lambda+1} U_\lambda(\pi_\lambda(V_1,\ldots,V_n))$ and
	\\ ${\expect[V(\pi_\lambda(V_0,V_1,\ldots, V_n))] \geq \frac{1}{\lambda+1}  \expect[V(\pi_\lambda(V_1,\ldots, V_n))]}$.
\end{claim}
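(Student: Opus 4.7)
The plan is to lower-bound both the utility and the expected value of the optimal rule $\pi' := \pi_\lambda(V_0,V_1,\ldots,V_n)$ on the extended sequence by comparing $\pi'$ against two simple reference strategies and invoking optimality together with Proposition~\ref{prop-more-patient}. Write $\pi := \pi_\lambda(V_1,\ldots,V_n)$, $U^* := U_\lambda(\pi)$, $V^* := \expect[V(\pi)]$, and $\ell := \expect[L(\pi)]$, so that $V^* = U^* + \lambda\ell$. The two reference strategies on $(V_0, V_1,\ldots, V_n)$ are strategy $A$, which immediately accepts $V_0$, and strategy $B$, which skips $V_0$ and then plays $\pi$.

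Strategy $A$ trivially gives $U(A) = V(A) = \expect[V_0]$. For strategy $B$ the selected candidate is whatever $\pi$ would have chosen on $V_1,\ldots,V_n$, so $V(B)=V^*$; the reference at its stopping time is $V_0 \lor M$, where $M$ is the maximum seen by $\pi$ before stopping, and the pointwise estimate $(V_0 \lor M - V(\pi))^+ \le V_0 + (M-V(\pi))^+ = V_0 + L(\pi)$ gives $U(B) \ge V^* - \lambda\expect[V_0] - \lambda\ell = U^* - \lambda\expect[V_0]$. Since $\pi'$ maximizes utility on the extended sequence, $U(\pi') \ge \max\{\expect[V_0],\, U^*-\lambda\expect[V_0]\}$, and a one-variable calculation shows that this maximum is at least $U^*/(\lambda+1)$ for every nonnegative value of $\expect[V_0]$, the worst case being $\expect[V_0] = U^*/(\lambda+1)$. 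This establishes the utility bound.

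For the value bound I would combine three complementary lower bounds on $\expect[V(\pi')]$. First, $\pi'$ always stops at a time $\ge 0$, so it is more patient than $A$, and Proposition~\ref{prop-more-patient} yields $\expect[V(\pi')] \ge V(A) = \expect[V_0]$. Second, the pointwise inequality $V \ge U$ combined with the utility bound gives $\expect[V(\pi')] \ge U^*/(\lambda+1)$. Third, the direct comparison of $\pi'$ and $B$ through the identity $V(\sigma) = U(\sigma) + \lambda \expect[L(\sigma)]$ produces $\expect[V(\pi')] \ge V^* - \lambda(\expect[V_0]+\ell) + \lambda\expect[L(\pi')]$. Taking the pointwise maximum of these and optimizing over $\expect[V_0]$ readily yields $\expect[V(\pi')] \ge U^*/(\lambda+1)$.

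The main obstacle will be closing the gap between this natural bound of $U^*/(\lambda+1)$ and the stated target $V^*/(\lambda+1) = U^*/(\lambda+1) + \lambda\ell/(\lambda+1)$ whenever $\ell > 0$. The plan for doing so is to use the threshold characterization of Proposition~\ref{prop-th}: $\pi'$ accepts $V_0$ precisely on the event $\{V_0 \ge \theta^*\}$, where $\theta^*$ is the smallest fixed point of $v \mapsto U_\lambda[v,1]$, and since $U_\lambda[v,1] \ge U^* - \lambda v$ one obtains $\theta^* \ge U^*/(\lambda+1)$ directly. On $\{V_0 \ge \theta^*\}$ the contribution to $\expect[V(\pi')]$ is at least $\theta^*\Pr(V_0 \ge \theta^*)$, while on the complementary event $\pi'$ follows the optimal continuation $\pi^{V_0}$ from reference $V_0 < \theta^*$. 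The delicate step is to lower-bound $\expect[V(\pi^{V_0}) \mid V_0 < \theta^*]$ by $V^*$ (or enough to recover the missing $\lambda\ell/(\lambda+1)$ term); because the paper explicitly warns that the expected selected value is \emph{not} monotone in the reference, this cannot be obtained from a blanket patience comparison of $\pi^{V_0}$ with $\pi$ and will instead require exploiting the fixed-point equation defining $\theta^*$ together with the decomposition $V^* - U^* = \lambda\ell$.
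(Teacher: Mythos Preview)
Your argument for the utility inequality is correct and essentially identical to the paper's: both use the two reference strategies ``accept $V_0$'' and ``skip $V_0$, then play $\pi$'' to obtain $U(\pi') \ge \max\{\expect[V_0],\, U^* - \lambda\expect[V_0]\} \ge U^*/(\lambda+1)$.

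For the value inequality, however, your proposal has a genuine gap that you yourself identify: your three bounds only deliver $\expect[V(\pi')] \ge U^*/(\lambda+1)$, not $V^*/(\lambda+1)$, and the proposed fix via the threshold $\theta^*$ is speculative. The ``delicate step'' --- lower-bounding $\expect[V(\pi^{V_0}) \mid V_0 < \theta^*]$ by $V^*$ --- runs squarely into the non-monotonicity of expected value in the reference point that the paper warns about, and nothing in the fixed-point characterization of $\theta^*$ obviously rescues it. In particular, the continuation rule $\pi^{V_0}$ need not be more patient than $\pi$ (nor the reverse), so Proposition~\ref{prop-more-patient} gives no leverage there.

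The paper avoids this difficulty entirely by routing through the \emph{unbiased} agent rather than trying to compare two $\lambda$-biased rules directly. The chain is
\[
\expect[V(\pi_\lambda(V_0,\ldots,V_n))] \;\ge\; \tfrac{1}{\lambda+1}\,\expect[V(\pi_0(V_0,\ldots,V_n))] \;\ge\; \tfrac{1}{\lambda+1}\,\expect[V(\pi_0(V_1,\ldots,V_n))] \;\ge\; \tfrac{1}{\lambda+1}\,\expect[V(\pi_\lambda(V_1,\ldots,V_n))],
\]
where the first inequality is Theorem~\ref{thm-ops-bias-ratio} applied to the extended sequence, the second is the trivial free-disposal monotonicity for an unbiased agent, and the third is monotonicity in $\lambda$ (the corollary of Proposition~\ref{prop-more-patient}). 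The key idea you are missing is that Theorem~\ref{thm-ops-bias-ratio} already compares expected \emph{values} (not utilities) between $\pi_\lambda$ and $\pi_0$, so invoking it once on the longer sequence sidesteps the $\lambda\ell$ gap altogether.
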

\begin{proof}
We lower bound $U_\lambda(\pi_\lambda(V_0,V_1,\ldots,V_{n}))$ by considering two alternative stopping rules that a $\lambda$-biased agent may use in this instance:
\begin{enumerate}
	\item Always select candidate $0$ - the expected utility is $\expect[V_0]$.
	\item Never select candidate $0$ and then continue as in the optimal stopping rule for $V_1,\ldots,V_n$  - the expected utility is at least $U_\lambda(\pi_\lambda(V_1,\ldots,V_n))-\lambda \expect[V_0]$.
\end{enumerate}
Since, each of these stopping rules is valid we get that:
\begin{align*}
U_\lambda(\pi_\lambda(V_0,V_1,\ldots,V_{n})) \geq \max \{\expect[V_0], U_\lambda(\pi_\lambda(V_1,\ldots,V_n))-\lambda \expect[V_0] \}
\end{align*}
Notice that if $\expect[V_0] \geq U_\lambda(\pi_\lambda(V_1,\ldots,V_n))-\lambda \expect[V_0]$ we get that $\expect[V_0] \geq \frac{U_\lambda(\pi_\lambda(V_1,\ldots,V_n))}{\lambda+1}$, hence $U_\lambda(\pi_\lambda(V_0,V_1,\ldots,V_{n})) \geq \frac{U_\lambda(\pi_\lambda(V_1,\ldots,V_n))}{\lambda+1}$. Else,  $\expect[V_0] < \frac{U_\lambda(V_1,\ldots,V_n)}{\lambda+1}$
and in this case,we get that 
\begin{align*}
U_\lambda(\pi_\lambda(V_0,V_1,\ldots,V_{n})) \geq U_\lambda(V_1,\ldots,V_{n}) -\lambda \expect[V_0] &\geq U_\lambda(\pi_\lambda(V_1,\ldots,V_{n})) -\lambda \frac{U_\lambda(\pi_\lambda(V_1,\ldots,V_n))}{\lambda+1} \\
&=  \frac{U_\lambda(\pi_\lambda(V_1,\ldots,V_{n}))}{\lambda+1} 
\end{align*}
 as required.
To observe that ${\expect[V(\pi_\lambda(V_0,V_1,\ldots, V_n))] \geq \frac{1}{\lambda+1}  \expect[V(\pi_\lambda(V_1,\ldots, V_n))]}$ notice that adding a candidate can never decrease the expected value of the candidate selected by an unbiased agent. Thus, we have that ${\expect[V(\pi_0(V_0,V_1,\ldots, V_n))] \geq \expect[V(\pi_0(V_1,\ldots, V_n))]}$. By applying Theorem \ref{thm-ops-bias-ratio} we get that ${\expect[V(\pi_\lambda(V_0,V_1,\ldots, V_n))] \geq \frac{1}{\lambda+1}   \expect[V(\pi_0(V_0,V_1,\ldots, V_n))]}$. Putting these together we get that ${\expect[V(\pi_\lambda(V_0,V_1,\ldots, V_n))] \geq \frac{1}{\lambda+1}  \expect[V(\pi_\lambda(V_1,\ldots, V_n))]}$ as required.
\end{proof}

\section{Ordering Problems} \label{sec-var} In the classic prophet inequality setting we assume that the order of the candidates is predetermined. In this section we relax the assumption. We first consider the setting in which the order of the candidates is chosen uniformly at random. We show two types of bounds here with respect to the prophet: a tight and easy bound of $n$ and a much more challenging bound of about $\ln(\lambda)$. Both bounds provide a significant improvement over the bounds for fixed order, and both are tight (up to constant factors) as a function of their respective parameters.

In the second ordering problem we consider, the agent can decide on the ordering of the candidates. We assume that the agent will do so in order to maximize the expected value of the selected candidate. This would be, for example, the objective of an agent who is sophisticated about its bias (i.e., is aware of it) and hence when ordering the candidates {it tries to maximize} the expected \emph{value} of the selected candidate. We mainly focus on distributions whose support only includes $2$ points (henceforth, $2$-point distributions). Such distributions were studied --- in the ordinary setting of (unbiased) optimal stopping --- by \cite{agrawal2020optimal}, which showed a bound of $5/4$ on the prophet inequality. For the optimal $\lambda$-biased stopping rule we show a bound of $2$, irrespective of the value of $\lambda$, which is a significant improvement with respect to the bound of $n$ that we have from Claim \ref{clm-bound-n} for $3$-point distributions. 

\subsection{Random Ordering}
In this section we assume that the order of the candidates is determined uniformly at random. In such a setting it is not hard to see that the expected value of the selected candidate is at least $\expect[V^*]/n$. To see why this is the case, consider the stopping rule that always selects the first candidate. The expected value of the candidate selected by this rule is $\sum_{i=1}^n \frac{1}{n} \expect[V_i] \geq \frac{1}{n} \expect[V^*]$. For this stopping rule the expected utility of the agent equals the expected value of the selected candidate. Thus, the expected value of the candidate selected by the optimal $\lambda$-biased stopping rule can only be higher. This proves the following claim.

\begin{claim} \label{clm-bound-n}
	The ratio between the prophet and a biased agent for random order is at most $n$.
\end{claim}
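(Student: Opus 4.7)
The plan is to exhibit one very simple stopping rule whose expected value is at least $\expect[V^*]/n$, and then bootstrap from it to the optimal $\lambda$-biased rule. Consider the rule that always selects the first candidate in the (random) order. Under a uniformly random permutation, this first candidate is equally likely to be any of the $n$ original positions, so the expected value of what the rule selects is $\frac{1}{n}\sum_{i=1}^n \expect[V_i]$. Since the $V_i$ are non-negative, $V^* = \max_i V_i \leq \sum_i V_i$, and therefore this quantity is at least $\frac{1}{n}\expect[V^*]$.

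Next I would observe that this ``first-candidate'' rule incurs zero loss: the reference value at the moment of selection is $0$, so the loss term $\lambda(v-v_1)^+$ vanishes deterministically. Consequently the expected $\lambda$-biased utility of the rule coincides with the expected value of what it selects, which is at least $\expect[V^*]/n$.

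Finally, since $\pi_\lambda$ is optimal for a $\lambda$-biased agent, its expected utility upper bounds that of the first-candidate rule and is therefore at least $\expect[V^*]/n$. Because $\lambda \geq 0$ and the loss $L(\pi_\lambda)$ is pointwise non-negative, we have $\expect[V(\pi_\lambda)] \geq \expect[V(\pi_\lambda)] - \lambda\, \expect[L(\pi_\lambda)] \geq \expect[V^*]/n$, which is the claim. The only step worth flagging as a potential trip-wire is this last one: one must invoke the gap between expected value and expected utility to convert an optimality statement about utility (which is what $\pi_\lambda$ maximizes) into a bound on expected value (which is what the claim is about). Beyond that the argument is entirely routine and does not require any structural property of the distributions beyond non-negativity.
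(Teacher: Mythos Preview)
Your argument is correct and essentially identical to the paper's: both use the ``always take the first candidate'' rule, note that its expected value is $\frac{1}{n}\sum_i \expect[V_i] \ge \frac{1}{n}\expect[V^*]$, observe that this rule has zero loss so utility equals value, and then pass to $\pi_\lambda$ via optimality. You have simply made each step (the $\max \le \sum$ bound, the zero reference value, and the utility-to-value conversion) more explicit than the paper does.
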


We demonstrate the tightness of this bound by considering a family of instances in which the values of all the candidates are drawn from the same distribution:
\begin{claim} \label{clm-iid-n}
	There exists a family of probability distributions parameterized by $n$ such that as $n$ approaches infinity and $\lambda$ is sufficiently large (as a function of $n$) $\expect[V^*] / \expect[V(\pi_\lambda)]$ approaches $n$.
\end{claim}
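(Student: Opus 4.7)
The plan is to exhibit an i.i.d.\ family (parameterized by $n$) whose common distribution is three-valued, and then choose $\lambda=\lambda(n)$ large enough that the biased agent's optimal policy degenerates to ``accept the first nonzero draw, and otherwise reject''. Concretely, let each $V_i$ independently take value $M$ with probability $p_M$, value $1$ with probability $p_c$, and value $0$ with probability $p_0$, where I will take $p_M = n^{-3/2}$, $p_0 = 1/n$, $p_c = 1-p_M-p_0$, and $M=n^3$; as the analysis will show, any $\lambda \geq n^{n+1}$ suffices.

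The first and main step is to verify that the greedy rule is in fact optimal for the biased agent. Using the Bellman recurrence from Section \ref{sec-model}, let $W_k$ denote the optimal expected utility starting at state (reference value $=1$, $k$ candidates remaining). The goal is to show that $W_k \leq 1$ for every $k\leq n-1$; by Proposition \ref{prop-th} this guarantees that at every state of the form (reference $=0$, $k$ remaining) or (reference $=1$, $k$ remaining) the agent accepts any draw equal to $1$. Under the greedy strategy the recurrence simplifies to the linear relation $W_k = p_M M + p_c + p_0 W_{k-1}$, whose fixed point $W^{*}=(p_M M + p_c)/(1-p_0)$ is of order $n^{3/2}$, with contraction rate $p_0 = 1/n$ and base case $W_1 = p_M M + p_c - p_0\lambda$ driven strongly negative by large $\lambda$. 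The closed form $W_k = W^{*} + p_0^{k-1}(W_1-W^{*})$ stays below $1$ for all $k\leq n-1$ precisely when $p_0^{k-1}(W^{*}-W_1) = n^{-(k-1)}\cdot\Theta(\lambda/n)$ exceeds $W^{*}-1 \sim n^{3/2}$, which holds for the choice $\lambda = n^{n+1}$.

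With the greedy rule pinned down, the two expectations follow by routine calculation. By i.i.d.\ exchangeability, conditional on some nonzero draw appearing (which has probability $1-p_0^n\to 1$), the first nonzero value equals $M$ with probability $p_M/(p_M+p_c)$ and $1$ with the complementary probability, so $\expect[V(\pi_\lambda)] \sim (p_M M+p_c)/(p_M+p_c)\sim n^{3/2}$. Meanwhile $\Pr[\exists i: V_i=M] = 1-(1-p_M)^n \sim np_M = n^{-1/2}$, which gives $\expect[V^*] \sim n p_M M = n^{5/2}$, and therefore $\expect[V^*]/\expect[V(\pi_\lambda)] \sim n$. Tracking the lower-order terms shows that this ratio equals $n(1+o(1))$, establishing the claim.

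The main obstacle is the Bellman analysis. The delicate point is not merely to show that $W_1$ is negative -- that is immediate for large $\lambda$ -- but to control how quickly $W_k$ relaxes back toward the positive fixed point $W^{*}>1$, since on a sufficiently long horizon the agent would otherwise strictly prefer rejecting a $1$ and waiting for an $M$. Balancing the contraction rate $p_0$, the fixed point $W^{*}$, the base case $W_1$, and the horizon length $n$ is what forces $\lambda$ to grow as $n^{\Omega(n)}$, which is fully consistent with the logarithmic-in-$\lambda$ bound alluded to at the beginning of Section \ref{sec-var}.
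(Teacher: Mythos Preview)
Your construction and argument are correct and follow the same template as the paper: take an i.i.d.\ three-point distribution with a zero value of probability $1/n$, a common ``middle'' value, and a rare ``high'' value; choose $\lambda$ large enough that the optimal $\lambda$-biased rule collapses to ``accept the first nonzero draw''; then compute $\expect[V^*]$ and $\expect[V(\pi_\lambda)]$ directly and show their ratio is $n(1+o(1))$. The paper's parameters are $\{0,\,1/n^3,\,1\}$ with high-value probability $1/n^2$ and $\lambda=n^{n+2}$, which is your $\{0,1,n^3\}$ up to rescaling and an inessential change in $p_M$. The one substantive methodological difference is in how greedy-optimality is verified: you solve the Bellman recursion for $W_k$ in closed form and check inductively that $W_k=W^*+p_0^{k-1}(W_1-W^*)\le 1$ for all $k\le n-1$, whereas the paper uses a one-line bound --- from reference equal to the middle value, \emph{any} continuation policy has expected utility at most $(\text{high}) - \lambda\cdot(1/n)^{n-1}\cdot(\text{middle})$, since with probability at least $(1/n)^{n-1}$ all remaining draws are zero and a loss of the middle value is unavoidable --- which is $\le\text{middle}$ once $\lambda\ge n^{n+2}$. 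Your Bellman analysis is more detailed and makes the necessary exponential growth of $\lambda$ explicit; the paper's bound is cruder but avoids the induction and the closed-form computation altogether.
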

\begin{proof}
	Consider the distribution:
$	V=
	\begin{cases}
	0, ~~~ w.p~ \frac{1}{n} \\
	\frac{1}{n^3}, ~~ w.p~ 1-\frac{1}{n^2}-\frac{1}{n}\\
	1, ~~~ w.p~ \frac{1}{n^2}
	\end{cases} .
	$
\\
	We bound the expected value of the best candidate using the probability that its value is $1$:
	\begin{equation} \label{eq:iid-n.1}
	\expect[V^*] \geq 1\cdot \left( 1- \left( 1-\frac{1}{n^2} \right)^n \right)
	\end{equation}
	As for the biased agent, we pick $\lambda$ large enough that the optimal $\lambda$-biased stopping rule always selects the first candidate that has non-zero value. Denote by $OPT(\lambda,n)$ the expected value of this stopping rule for $n$ candidates. To solve for the expected value of the candidate selected by the $\lambda$-biased agent we let $OPT(\lambda,n) = \expect[V(\pi_\lambda(\underbrace{V,\ldots,V}_n))]$ and define and solve the following recurrence relation:
	\begin{align*}
	OPT(\lambda,n) = \expect[V] + \frac{1}{n} \cdot OPT(\lambda,n-1)
	\end{align*}
	where $OPT(\lambda,0) = 0$. Thus, we have that:
	\begin{equation} \label{eq:iid-n.2}
	\expect[V(\pi_\lambda)] = OPT(\lambda,n) =\sum_{i=0}^{n-1} n^{-i} \cdot \expect[V]  = \expect[V] \cdot \frac{1-n^{-n}}{1-n^{-1}} .
	\end{equation}
	Combining equations~\eqref{eq:iid-n.1} and~\eqref{eq:iid-n.2} we get that
	\begin{equation} \label{eq:iid-n.3}
	\frac{\expect[V^*]}{\expect[V(\pi_\lambda)]} \geq \frac{1- \left( 1-\frac{1}{n^2} \right)^n}{\expect[V] \cdot \frac{1-n^{-n}}{1-n^{-1}} } \geq
	\frac{ 1- \left( 1  -\frac{1}{n^2} \right)^n }{ \left( \frac{1}{n^2}+\frac{1}{n^3} \right) \cdot \frac{1}{1-n^{-1}} }
	 = (n-1) \cdot \frac{1-(1-\frac{1}{n^2})^n}{(\frac{1}{n^2}+\frac{1}{n^3}) \cdot n} .
 	\end{equation}
	Observe that
	\begin{align*}
	{\left(1- \frac{1}{n^2} \right)^n} = \sum_{i=0}^n (-1)^i \binom{n}{i} \left( \frac{1}{n^2} \right)^i 	\leq 1-\frac{1}{n}+\frac{1}{2n^2} .
	\end{align*}
	Thus, we have that
	\begin{align*}
	\frac{1- \left( 1-\frac{1}{n^2} \right)^n}{ \left( \frac{1}{n^2}+\frac{1}{n^3} \right) \cdot n} \geq \frac{\frac{1}{n}-\frac{1}{2n^2}}{\frac{1}{n}+\frac{1}{n^2} } = \frac{2n-1}{2n+2} = 1 - \frac{3}{2n+2}
	\end{align*}
	which implies that
	\begin{align*}
	\lim_{n\rightarrow \infty} \frac{1- \left( 1-\frac{1}{n^2} \right)^n}{ \left( \frac{1}{n^2}+\frac{1}{n^3} \right) \cdot n}  {\ge  \lim_{n\rightarrow \infty}	1 - \frac{3}{2n+2} = 1}.
	\end{align*}
	In combination with~\eqref{eq:iid-n.3} this implies $\frac{\expect[V^*]}{\expect[V(\pi_\lambda)]} \geq
	n - o(n)$ as $n \to \infty$. Since we established an upper bound of $n$ in Claim \ref{clm-bound-n} we conclude that the ratio $\expect[V^*] / \expect[V(\pi_\lambda)]$ approaches $n$ as required.

		For completeness we give a bound on the value of $\lambda$ for which the agent will always select a non-zero candidate. In particular, we should pick $\lambda$ such that the optimal $\lambda$-biased stopping rule selects a candidate of value $\frac{1}{n^3}$ if this is the first candidate with positive value. We need to show that for every such candidate, the potential expected loss from not taking the candidate is larger than the potential benefit of not taking it and getting a candidate of value $1$.  Observe that in any step $t$ the probability that all the next candidates will have a value of $0$ is at least $(\frac{1}{n})^{n-1}$ and in this case the agent will exhibit a loss of $\lambda \cdot \frac{1}{n^3}$. On, the other hand, the benefit from not taking the candidate can be crudely bounded by $1-\frac{1}{n^3}$. Thus, we get that the agent will follow the desired stopping rule whenever:
	\begin{align*}
	\left( \frac{1}{n} \right)^{n-1} \cdot \lambda \cdot \frac{1}{n^3} \geq 1-\frac{1}{n^3}
	\end{align*}
	which clearly holds for $\lambda = n^{n+2}$.
\end{proof}

In the previous construction we established a tight bound of $n - o(n)$ on the ratio between the prophet and a $\lambda$-biased agent by taking $\lambda$ which is exponential in $n$. We now show that this dependency is required by showing that as a function of $\lambda$ the bound between the prophet and a $\lambda$-biased agent is approximately bounded by $\ln(\lambda)$.

\begin{proposition} \label{random-order-prophet}
	For a random order the ratio
$	\dfrac{\expect[V^*]}{\expect[V(\pi_\lambda)]} \leq \rho$
	where $\rho > 1$ is the solution to the equation
	\begin{equation} \label{eq:rho}  \rho - \tfrac{1}{\lambda+1}(\rho-1) =
	\ln(\lambda+1) - \ln(1 - \rho^{-1}) .
	\end{equation}
\end{proposition}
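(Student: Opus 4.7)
The plan is to exhibit a threshold rule $\tau$ satisfying $\expect[V(\tau)] - \lambda\,\expect[L(\tau)] \geq \expect[V^*]/\rho$; the proposition then follows from the optimality of $\pi_\lambda$, since
\begin{align*}
\expect[V(\pi_\lambda)] \;\geq\; \expect[V(\pi_\lambda)] - \lambda\,\expect[L(\pi_\lambda)] \;\geq\; \expect[V(\tau)] - \lambda\,\expect[L(\tau)].
\end{align*}
I take $\tau$ to be the $\theta$-threshold rule from Theorem~\ref{thm-prophet}, with $\theta$ now chosen so that $\Pr(V^* > \theta) = p := 1 - 1/\rho$. The loss bound is inherited verbatim from the adversarial-order proof: a loss is incurred only when $V^* \leq \theta$ (an event of probability $1-p$), and in that case it is at most $\theta$, so $\expect[L(\tau)] \leq (1-p)\theta$.

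The crux is a random-order lower bound on $\expect[V(\tau)]$ that beats the adversarial-order bound used in Theorem~\ref{thm-prophet}. The key structural observation is that under a uniformly random order, if $S_\theta := \{i : v_i > \theta\}$ is non-empty then $\tau$ selects a uniformly random element of $S_\theta$, and otherwise it selects the last candidate (of value $\geq 0$). Writing $N = |S_\theta|$ and $M_y = |\{i : v_i > y\}|$, the layer-cake identity combined with this uniform-selection property yields
\begin{align*}
\expect[V(\tau)] \;\geq\; p\,\theta \;+\; \int_{\theta}^{\infty}\expect\!\left[\frac{M_y}{N}\,\mathbf{1}[N\geq 1]\right]\,dy,
\end{align*}
since $\Pr(V(\tau)>y)\geq p$ for $y<\theta$ while for $y\geq\theta$ the conditional probability that the uniformly chosen element of $S_\theta$ exceeds $y$ is exactly $M_y/N$.

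The main obstacle is to extract from this expression the logarithmic dependence on $\lambda$ that defines $\rho$. The plan is to use the identity $\tfrac{1}{N} = \int_0^1 x^{N-1}\,dx$ (valid for $N\geq 1$) and swap integrations via Fubini; independence of the $v_i$'s then factors the resulting generating function into $\prod_i \bigl(1 - (1-x)\Pr(v_i>\theta)\bigr)$, which by $1-t \leq e^{-t}$ is controlled in terms of $\prod_i (1-\Pr(v_i>\theta)) = 1-p$. Integrating the resulting expression in $x$ over $[0,1]$ introduces the $\ln(1/(1-p))$-type factor, producing a lower bound of the schematic form $\expect[V(\tau)] \geq p\,\theta + c(p,\lambda)\,\expect[(V^*-\theta)^+]$ with $c(p,\lambda)$ involving a logarithm. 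Combining this with $\expect[L(\tau)]\leq (1-p)\theta$, imposing $\expect[V(\tau)] - \lambda\,\expect[L(\tau)] \geq (1-p)\bigl(\theta + \expect[(V^*-\theta)^+]\bigr) \geq \expect[V^*]/\rho$, and matching the coefficients of $\theta$ and of $\expect[(V^*-\theta)^+]$ separately, one extracts a two-constraint system whose binding case gives exactly the defining equation $\rho - \tfrac{\rho-1}{\lambda+1} = \ln(\lambda+1) - \ln(1-1/\rho)$.
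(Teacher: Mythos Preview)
Your overall architecture --- threshold rule, loss bound $\expect[L(\tau)]\leq q\theta$ with $q=\Pr(V^*\le\theta)$, value bound $\expect[V(\tau)]\geq (1-q)\theta + c\cdot(\text{surplus above }\theta)$, then match the $\theta$-coefficient and the surplus-coefficient to $1/\rho$ --- is exactly the paper's, and your $M_y/N$ device combined with $1/N=\int_0^1 x^{N-1}dx$ is essentially a repackaging of the paper's Jensen argument: both routes yield the same constant $c=\dfrac{1-q}{\ln(1/q)}$.

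The gap is in your choice of threshold. You set $p:=1-1/\rho$, i.e.\ $q=\Pr(V^*\le\theta)=1/\rho$. The $\theta$-coefficient in the utility lower bound is then
\[
p-\lambda(1-p)\;=\;1-\frac{\lambda+1}{\rho},
\]
and for this to meet your target $1-p=1/\rho$ you would need $\rho\geq\lambda+2$. But the solution of~\eqref{eq:rho} satisfies $\rho\sim\ln\lambda$ (Lemma~\ref{lem:rho}), so for large $\lambda$ this coefficient is in fact hugely negative ($\approx -\lambda/\ln\lambda$), and the inequality fails outright. Intuitively, with probability $q=1/\rho$ the rule falls through to the last candidate and may incur loss $\theta$; multiplied by $\lambda\gg\rho$, this wipes out the gain. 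Hence your final claim --- that the binding two-constraint system reproduces~\eqref{eq:rho} --- is not correct for the threshold you specified. The paper instead takes $q=\dfrac{1-\rho^{-1}}{\lambda+1}$, which makes $1-(\lambda+1)q=\rho^{-1}$ identically; then requiring $\dfrac{1-q}{\ln(1/q)}=\rho^{-1}$ is exactly equation~\eqref{eq:rho}.

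A smaller issue: the inequality $1-t\leq e^{-t}$ points the wrong way for your purposes --- you need a \emph{lower} bound on $\prod_j\bigl(1-(1-x)p_j\bigr)$. The correct tool is concavity of $s\mapsto\ln(1-sp_j)$ (Jensen), giving $\prod_j(1-sp_j)\geq q^{\,s}$ and hence $\int_0^1\prod_j(1-sp_j)\,ds\geq\int_0^1 q^{\,s}\,ds=\dfrac{1-q}{\ln(1/q)}$, matching the paper's bound on the $c_i$.
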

Observe that for $\lambda=0$ we get that $\rho = e/(e-1)$.
In Appendix~\ref{app-order} we prove that
$|\rho - \ln(\lambda)| \to 0$ as $\lambda \to \infty$.




\begin{proof}
	Choose threshold $\theta$ such that
	$$\Pr(V^* < \theta) = \frac{1 - \rho^{-1}}{\lambda+1}  . $$
	As in the proof of Theorem~\ref{thm-prophet}, we denote by $\tau_\lambda$ the threshold strategy in which we select the first candidate that has value greater than $\theta$.\footnote{As in the proof of Theorem~\ref{thm-prophet}, there are technicalities that arise if the distribution of
	$V^*$ has a point-mass at $\theta$. Appendix~\ref{app-tiebreak} explains how to
	address the technicalities.} If there is no such candidate the threshold strategy selects the last candidate.

	We begin by
	observing that
	\begin{equation} \label{eq:rop.1}
	\dfrac{\expect[V^*]}{\expect[V(\pi_\lambda)]} \leq
	\dfrac{\expect[V^*]}{\expect[V(\pi_\lambda)] - \lambda \cdot \expect[L(\pi_\lambda)]} \leq
	\dfrac{\expect[V^*]}{\expect[V(\tau_\lambda)] - \lambda \cdot \expect[L(\tau_\lambda)]} .
	\end{equation}
	Let $q = \frac{1 - \rho^{-1}}{\lambda+1}$ denote the probability that
	$V^* < \theta$. To bound the numerator on the right side of
	\eqref{eq:rop.1}, note that
	\begin{equation} \label{eq:rop.2}
	 \expect[V^*] \leq \theta + \expect[(V^*-\theta)^+] \leq
	\theta + \sum_{i=1}^n \expect[(V_i-\theta)^+] .
	\end{equation}
	To bound the denominator, for each $i$ let $c_i$ denote the probability that all of the values observed
	before the arrival of $v_i$ are less than $\theta$. Then
	\begin{align}
	\label{eq:rop.3}
	\expect[V(\tau_\lambda)] & \geq (1-q) \theta + \sum_{i=1}^n c_i \expect[(V_i - \theta)^+]
	\\
	\label{eq:rop.4}
	\expect[L(\tau_\lambda)] & \le q \theta
	\\
	\label{eq:rop.5}
	\expect[V(\tau_\lambda)] - \lambda \cdot \expect[L(\tau_\lambda)] & \ge
	(1 - (\lambda+1) q) \theta + \sum_{i=1}^n c_i \expect[(V_i - \theta)^+] .
	\end{align}
	By our choice of $q$, we have $1 - (\lambda+1)q = \rho^{-1}$.
	Below we will show that for every candidate $i$, $c_i \ge \rho^{-1}$. Notice that $i$ is the index of the candidate and not its location in an ordering (the ordering is chosen uniformly at random).
	Assuming this inequality for the moment, it implies
	\begin{equation} \label{eq:rop.6}
	\expect[V(\tau_\lambda)] - \lambda \cdot \expect[(L\tau_\lambda)]  \ge
	\rho^{-1} \theta + \rho^{-1} \sum_{i=1}^n \expect[(v_i - \theta)^+]
	\end{equation}
	and, in conjunction with~\eqref{eq:rop.2}, this implies
	\begin{equation} \label{eq:rop.7}
	\dfrac{\expect[V^*]}{\expect[V(\tau_\lambda)] - \lambda \cdot \expect[L(\tau_\lambda)]} \le \rho
	\end{equation}
	which will finish the proof of the proposition.

	To bound $c_i$ from below, it will help to analyze the
	following sampling process for generating the random
	permutation of the items. We sample i.i.d.\ uniformly-distributed
	values $\alpha_1,\alpha_2,\ldots,\alpha_n$ from the interval $[0,1]$ and
	assume the items arrive in order of increasing $\alpha_i$.
	Recall that $c_i$ is the probability of the event $\mathcal{E}_i$,
	that no items arriving before $i$ have values above $\theta$.
	We have
	\begin{equation} \label{eq:rop.8}
	\mathcal{E}_i = \bigcap_{j \neq i} \mathcal{E}_{ij},
	\end{equation}
	where $\mathcal{E}_{ij}$ denotes the event that {\em either}
	$V_j < \theta$ or $\alpha_j > \alpha_i$. Let $q_j = \Pr(V_j < \theta)$.
	We have
	\begin{equation} \label{eq:rop.9}
	\Pr(\mathcal{E}_{ij} \mid \alpha_i) = (1 - \alpha_i) + \alpha_i q_j
	\end{equation}
	and the events $\{ \mathcal{E}_{ij} \mid j \neq i \}$ are
	conditionally independent, given $\alpha_i$. Hence,
	\begin{equation} \label{eq:rop.10}
	\Pr(\mathcal{E}_i \mid \alpha_i) = \prod_{j \neq i} (1 - \alpha_i + \alpha_i q_j) .
	\end{equation}
	Each factor in the product on the right side can be bounded
	from below using Jensen's Inequality:
	\begin{equation} \label{eq:rop.11}
	1 - \alpha_i + \alpha_i q_j = (1-\alpha_i) \exp(0) + \alpha_i \exp(\ln(q_j)) \geq
	\exp( \alpha_i \ln(q_j) ) .
	\end{equation}
	Multiplying these lower bounds together, we obtain
	\begin{equation} \label{eq:rop.12}
	\Pr(\mathcal{E}_i \mid \alpha_i)  \geq \exp \left( \alpha_i \sum_{j \neq i} \ln(q_j) \right).
	\end{equation}
	Now, observe that
	\begin{equation} \label{eq:rop.13}
	\sum_{j \neq i} \ln(q_j) = \ln \Pr( \max_{j \neq i} V_j \, < \, \theta )
	\geq \ln \Pr(V^* < \theta) = \ln(q)
	\end{equation}
	hence
	\begin{equation} \label{eq:rop.14}
	\Pr(\mathcal{E}_i \mid \alpha_i) \geq \exp ( \alpha_i \ln(q) ) .
	\end{equation}
	Now integrate with respect to $\alpha_i$ to obtain the unconditional
	probability:
	\begin{equation} \label{eq:rop.15}
	c_i = \Pr(\mathcal{E}_i) \geq \int_0^1 \exp(\alpha_i \ln(q) ) \, d \alpha_i
	=
	\frac{1}{\ln(q)} \left[ \exp(\ln(q)) - \exp(0) \right] = \frac{1-q}{\ln(1/q)} .
	\end{equation}
	Recalling that $q = \frac{1 - \rho^{-1}}{\lambda+1}$ and that
	$\rho - \tfrac{1}{\lambda+1}(\rho-1) =
	\ln(\lambda+1) - \ln(1 - \rho^{-1})$
	we find that
	\begin{align*}
	1-q & = \frac{\lambda + \rho^{-1}}{\lambda+1} \\
	\ln(1/q) &= \ln(\lambda+1) - \ln(1 - \rho^{-1}) = \rho - \tfrac{1}{\lambda+1}(\rho-1) \\
	\frac{1-q}{\ln(1/q)} & = \frac{\lambda + \rho^{-1}}{(\lambda+1) \rho - (\rho - 1)} =
	\frac{\lambda + \rho^{-1}}{\lambda \rho + 1} = \rho^{-1}
	\end{align*}
	which concludes the proof that $c_i \geq \rho^{-1}$, as desired.
\end{proof}

\subsection{Picking the Best Ordering}
In this section we consider the setting in which the agent, or someone else on his behalf, can choose the ordering to maximize the expected value of the selected candidate. It is not hard to see that by placing the candidate with highest expectation first, the value of the candidate selected by a $\lambda$-biased agent for any $\lambda>0$ is at least $1/n$ of the expected value of the candidate selected by a prophet. Claim \ref{clm-iid-n} which considers i.i.d.\  distributions applies to this case as well, and implies that this is tight if we allow $\lambda$ to depend exponentially on $n$. The probability distribution used in Claim \ref{clm-iid-n} has $3$ points in its support. In this section we show that this is a necessary condition by proving that for any $\lambda>0$, the prophet inequality for the best ordering when the candidates are drawn from distributions with $2$ points in their support is at most $2$. 

 Formally,  2-point distributions are defined as follows:
\begin{align*}
V_i=
\begin{cases}
h_i,~~~~w.p~~ p_i \\
l_i,~~~~w.p~~ 1-p_i
\end{cases}
\end{align*}

As in previous settings we assume that the agent knows the distributions that the candidates are drawn from.
The proof of the following proposition has a similar structure to the proof of Agrawal et al.\ \cite{agrawal2020optimal} showing that for a rational agent the ratio between the candidate selected by the prophet and by the optimal stopping rule is at most $5/4$.

\begin{proposition} \label{prop-2point-2}
	If all candidates are drawn from 2-point distributions then for every $\lambda$, there exists an ordering such that the ratio between the expected value of the candidate selected by the optimal $\lambda$-biased stopping rule and a prophet is at most $2$. This is tight as $\lambda$ approaches infinity.
\end{proposition}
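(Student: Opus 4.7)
The plan is to exhibit, for each $2$-point instance, an ordering of the candidates and a simple stopping rule $\tau$ such that the biased agent's optimal value $\expect[V(\pi_\lambda)]$ is at least $\tfrac{1}{2}\expect[V^*]$. Let $L := \max_i l_i$ and let $k$ be a candidate achieving $l_k = L$. I will order candidates by decreasing $h_i$ (breaking ties so that $k$ sits as late as possible), and let $\tau$ be the $L$-threshold rule: accept the first $V_t > L$; otherwise take $V_n$.

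The first observation, leveraging utility-optimality of $\pi_\lambda$, is that
\begin{align*}
\expect[V(\pi_\lambda)] \;\geq\; \expect[V(\pi_\lambda)] - \lambda\,\expect[L(\pi_\lambda)] \;\geq\; \expect[V(\tau)] - \lambda\,\expect[L(\tau)].
\end{align*}
The $h$-decreasing ordering ensures that whenever $\tau$ accepts a candidate at value $V_t > L$, every earlier rejected candidate had realization at most $L$, so the reference never exceeds $L < V_t$ and no loss accrues on an accepted-above-$L$ value. Loss can only arise when $\tau$ falls through to the final candidate; if $k$ happens to land in position $n$, then $V_n \geq l_n = L$ meets the reference and the loss is exactly zero, and otherwise the loss is bounded by the probability of falling through times the gap between the reference and $l_n$.

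Second, a direct computation with $a_t = p_t\,\mathbb{1}[h_t > L]$ and $\pi_t = \prod_{s<t}(1-a_s)$ gives $\expect[V(\tau)] \geq L + \sum_t a_t \pi_t (h_t - L)$, while a union bound on $(V^*-L)^+$ yields $\expect[V^*] \leq L + \sum_t a_t (h_t - L)$. Mirroring Agrawal et al.\ \cite{agrawal2020optimal}, the key identity under $h$-decreasing ordering is the telescoping equality $\sum_{t=1}^{m_y} a_t \pi_t = 1 - \prod_{t=1}^{m_y}(1-a_t) = P(V^* > y)$, where $m_y = |\{t: h_t > y\}|$. This shows that on the range $y \in [L, H^*)$, $P(V(\tau) > y) = P(V^* > y)$, so $\tau$'s value matches the prophet's value on the tail above $L$. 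The remaining gap $\expect[(L - V(\tau))^+]$ on the range below $L$, which only arises when $k$ is not placed last, together with the corresponding $\lambda\,\expect[L(\tau)]$ term, constitutes the main technical obstacle: I will show both can be absorbed by the slack available in the factor of $2$, completing the bound $U(\tau) \geq \tfrac{1}{2}\expect[V^*]$.

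For tightness as $\lambda \to \infty$, I will construct an instance in which a deterministic candidate of value $c$ competes with a $\{H, L'\}$ candidate whose low value $L'$ lies strictly between $0$ and $c$; the presence of $L'$ means every ordering incurs a $\lambda$-scaled penalty somewhere, forcing the biased agent to commit to approximately half of the prophet's expected value. Tuning $c$, $L'$, $H$, and the realization probability together with $\lambda \to \infty$, I will argue that neither ordering of the two candidates can achieve better than ratio $2 - o(1)$, demonstrating tightness.
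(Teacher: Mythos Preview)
Your approach has a genuine gap in the upper-bound argument. You lower-bound $\expect[V(\pi_\lambda)]$ by the biased utility $\expect[V(\tau)] - \lambda\,\expect[L(\tau)]$ of the $L$-threshold rule under the decreasing-$h$ ordering, and then hope to absorb the loss term into the factor-of-$2$ slack. But the statement must hold for \emph{every} $\lambda$, and the loss term scales with $\lambda$. Concretely, take $V_1$ with $(h_1,l_1)=(10,5)$ and $V_2$ with $(h_2,l_2)=(8,0)$, each with probability $\tfrac12$. Here $L=5$, $k=1$, and in decreasing-$h$ order $k$ is \emph{first}, not last. With probability $\tfrac14$ the rule falls through to $V_2=0$ against reference $5$, so $\expect[L(\tau)]=\tfrac54>0$ and $\expect[V(\tau)]-\lambda\,\expect[L(\tau)]=7-\tfrac{5}{4}\lambda$ becomes negative for $\lambda>5.6$, while $\tfrac12\expect[V^*]=4.125$. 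No amount of ``slack in the factor of $2$'' can absorb a term that grows without bound in $\lambda$. The paper sidesteps this entirely: it analyzes the \emph{optimal} $\lambda$-biased rule directly (not a surrogate), exploits the fact that in decreasing-$h$ order a realized high value is never regretted, and---crucially---uses \emph{two} candidate orderings, showing the better of them is a $2$-approximation. Both orderings place $X^\#$ (your $k$) where the agent can reach it loss-free, so no $\lambda$-dependent term ever enters the bound.

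Your tightness construction also does not work. With a deterministic candidate of value $c$ and a $\{H,L'\}$ candidate where $0<L'<c<H$, the ordering that puts the random candidate first is loss-free for every $\lambda$: if $L'$ is realized the agent continues to the sure $c>L'$ with zero loss, and if $H$ is realized the agent stops. Hence $\expect[V(\pi_\lambda)]=pH+(1-p)c=\expect[V^*]$ in that ordering, giving ratio $1$, not $2$. A tight example requires \emph{both} candidates to have a positive-probability low value so that every ordering carries a nontrivial $\lambda$-scaled risk; the paper's construction does exactly this with $V_1\in\{1,\eps^2\}$ and $V_2\in\{\expect[V_1],0\}$.
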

\begin{proof}
	Let $X^\#$ denote a candidate with a maximal low value (i.e., $l^\# \in \arg\max{l_i}$).
	We show that one of the following two orderings guarantees that for any $\lambda>0$ the optimal $\lambda$-biased stopping rule selects a candidate of expected value at least $\expect[V^*]/2$:
	\begin{enumerate}
		\item Order all the candidates in decreasing order of $h_i$. \label{enum-decreasing}
		\item First order all candidates such that $h_i\geq E[X^\#]$ in decreasing order of high value. Then, locate $X^\#$ and after it the rest of the candidates in decreasing order of high value.	\label{enum-middle}
	\end{enumerate}
In the following analysis instead of analyzing the expected value of ordering \ref{enum-middle} we analyze the expected value of the candidate selected when ordering \ref{enum-middle} is truncated by removing all of the candidates after $X^\#$. By Claim \ref{clm-mon-more-end} we have that the expected value of the candidate selected from
ordering \ref{enum-middle} is at least as high as the expected value of the candidate selected from this
truncated ordering, which we refer to as ordering \ref{enum-middle}' henceforth.

Observe that for any $\lambda>0$, any candidate $V_i$ before $X^\#$ such that $h_i\geq \expect[X^\#]$ will be selected if and only if its high value is realized. Candidates before $X^\#$ are never selected when their low value is realized because for any such candidate we have that $l_i \leq l^\#$, so the expected utility for continuing is at least as great as the expected value for stopping, in the event that
	$l_i$ is realized. On the other hand, if $h_i$ is realized, in the case of ordering 1 it is easy to see that the optimal $\lambda$-biased rule must stop on $V_i$ because $h_i$ lies above the support of the value distribution of every remaining unobserved candidate. In the case of ordering 2', to show that the optimal $\lambda$-biased rule stops when $h_i \ge \expect[X^\#]$ is realized, we observe that if the stopping rule were to continue past candidate $i$, then conditional on stopping before $X^\#$ it must select a candidate of value at most $h_i$, and conditional on stopping at $X^\#$ it selects a candidate of expected value $\expect[X^\#]$, so in both cases the expected {utility} of continuing is no greater than the {utility} of stopping on $h_i$.
This together with the fact that in both orderings any candidate $i$ such that $h_i \geq h^\#$ is located before $X^\#$ implies that in the scenario that $V^*>h^\#$ the biased agent would select the best candidate in both orderings.

From now on we focus on the scenario that $V^*\leq h^\#$. Let $S$ denote the set of candidates whose high value is less than or equal to $h^\#$ excluding $X^{\#}$. 
We denote by $p_S$ the probability that a high value was realized for at least one of the candidates in $S$ (i.e., $p_S = 1 - \prod_{j\in S} (1-p_j)$). We also denote by $w_S$ the expected value of the best candidate among $S$ in this case (i.e., $w_S = \expect[\max_{j\in S} V_j | \exists j\in S \text{~s.t~} V_j =h_j]$ ). With this notation, the expected value of the best candidate for this scenario is:
\begin{align*}
\expect[V^*|V^*\leq h^\#] = p^\#\cdot h^\#+(1-p^\#) p_S\cdot w_S + (1-p^\#)(1- p_S) \cdot l^\#
\end{align*}

Observe that the {conditional} expected value of ordering \ref{enum-decreasing} {given $V^* \leq h^\#$} is at least $\expect[X^\#]$. Also notice that since we are interested in a bound for any $\lambda>0$ we cannot guarantee a better lower bound. The reason for this is that since $l^\#$ is the maximal low value there exists a value of $\lambda$ for which a $\lambda$-biased agent would select $X^\#$ even when a low value is realized. As for ordering \ref{enum-middle}', recall that we select any candidate $i$ prior to $X^\#$ if and only if a high value is realized
and $h_i\geq \expect[X^\#]$. Moreover, since the candidates are ordered in decreasing order of high value the agent will select the candidate
that has the highest such value. If no such value is realized the expected value of the selected candidate will be $\expect[X^\#]$.
{Letting $S'=\{i| \expect[X^\#] \leq h_i \leq h^\#\}$}, the {conditional} expected value of the selected candidate in this ordering is at least:

\begin{align*}%
	\expect \left[ \max \left\{ \max_{i\in S':V_i=h_i} h_i  ,\expect[X^\#] \right\}
	                 \right]
	& =
  \expect \left[ \max \left\{ \max_{i\in S:V_i=h_i} h_i  ,\expect[X^\#] \right\} \right]
	& \geq p_S\cdot w_S + (1-p_S)\expect[X^\#]
\end{align*}
To complete the proof we show that the sum of {conditional} expected values of {orderings 1 and 2'} is greater than {or equal to} $\expect[V^*|V^*\leq h^\#]$. To this end, observe that
\begin{align*}
\underbrace{\expect[X^\#]}_{\text{ordering 1}} + \underbrace{p_S\cdot w_S + (1-p_S)\expect[X^\#]}_{\text{ordering 2}}
& \geq p^\#\cdot h^\#+(1-p^\#) p_S\cdot w_S + (1-p^\#)(1- p_S) \cdot l^\# \\
& = \expect[V^*|V^*\leq h^\#]
\end{align*}
This implies that at least one of the orderings should provide a $2$-approximation. {Together with the fact that the expected value of the candidate selected in ordering 2 is at least as high the expected value of the candidate selected in ordering 2' this concludes the proof. }
\end{proof}

In Claim \ref{clm-2-point-tight} in the appendix we show that this bound is essentially tight. We do so by  constructing a family of instances with all candidates drawn from 2-point distributions, such that as $\lambda$ goes to infinity the ratio between the expected value of the selected candidate in the best ordering and the value of the best candidate in hindsight approaches $2$.

\bibliographystyle{plain}
\bibliography{ref}

\appendix
\section{Using Randomized Tie Breaking to Deal With Point Masses} \label{app-tiebreak}
In Theorem~\ref{thm-prophet} and Proposition~\ref{random-order-prophet}
we analyzed stopping rules defined by setting a threshold $\theta$ so as
to equate $\Pr(V^* \le \theta)$ with a specified constant. When the
distribution of $V^*$ contains no point masses, its cumulative
distribution function is continuous, so the intermediate value
theorem guarantees the existence of such a $\theta$.

However, when the distribution of $V^*$ contains point masses,
it is possible that there is no $\theta$ that makes
$\Pr(V^* \le \theta)$ precisely equal to the specified
constant. If so, one should interpret the definition of
the $\theta$-threshold strategy to include
randomized tie-breaking.

\begin{definition}[randomized $\theta$-threshold strategy]
  For $\theta \ge 0$ and $0 \leq q \leq 1$, the randomized
  $\theta$-threshold strategy with parameter $q$ is the
  randomized stopping rule that behaves as follows:
  it always selects the final candidate in the sequence
  if no earlier candidate was selected; otherwise, when
  observing candidate $i$ with value $v_i$, it never selects
  the candidate if $v_i < \theta$, it always selects the
  candidate if $v_i > \theta$, and when $v_i = \theta$ it
  selects the candidate with probability $q$.\footnote{If
  the same randomized tie-breaking rule was already invoked
  on a previous candidate $j$ with $v_j = \theta$ who was
  not selected, the probability of selecting candidate $i$
  in the present time step remains equal to $q$.}
\end{definition}

If $\tau_\lambda$ is a randomized $\theta$-threshold
strategy, the probability $\Pr(V(\tau_\lambda) \ge \theta)$
that the strategy selects an element whose value is
greater than or equal to $\theta$ always satisfies
the following bounds.
\begin{equation} \label{eq:rand-theta}
  \Pr(V^* > \theta) \le
  \Pr(V(\tau_\lambda) \ge \theta) \le
  \Pr(V^* \ge \theta) .
\end{equation}
Let $a$ and $b$, respectively,
denote the lower and upper bounds on the left and right
sides of~\eqref{eq:rand-theta}.
If $a=b$ --- i.e., if
the distribution of $V^*$ does not have a point-mass
at $\theta$ --- then all three of the quantities listed
in~\eqref{eq:rand-theta} must be equal, irrespective of
the value of the parameter $q$. Otherwise, as $q$ varies
from 0 to 1, the probability $\Pr(V(\tau_\lambda) \ge \theta)$
varies continuously and monotonically over the interval
$[a,b]$, starting at $a$ when
$q=0$ and ending at $b$ when
$q=1$. By the intermediate value theorem, for any
specified probability $\alpha$ in the interval $[a,b]$, we can
choose $q$ such that $\Pr(V(\tau_\lambda) \ge \theta) = \alpha$.

In the proofs of Theorem~\ref{thm-prophet} and
Proposition~\ref{random-order-prophet}, when we
write that $\theta$ is chosen so that
$\Pr(V^* > \theta) = \alpha$ for some specified
constant $\alpha$, what we really mean is that
\begin{equation} \label{eq:theta-inf}
  \theta = \inf \{ \theta' \mid \Pr(V^* > \theta') < \alpha \} .
\end{equation}
For this value of $\theta$, if we define
$a = \Pr(V^* > \theta)$ and $b = \Pr(V^* \ge \theta)$,
then $\alpha$ belongs to the interval $[a,b]$,
and consequently, as argued above,
there exists some $q \in [0,1]$ such
that the $\theta$-threshold strategy with
parameter $q$ has probability exactly $\alpha$
of selecting a candidate of value greater
than or equal to $\theta$. This particular
randomized $\theta$-threshold strategy is
the one analyzed in the proofs of
Theorem~\ref{thm-prophet} and
Proposition~\ref{random-order-prophet}.
When $a=b$ the choice of $q$ is immaterial
--- i.e., the equation $\Pr(V(\tau_\lambda) \ge \theta) = \alpha$
is satisfied regardless ---
so we adopt the convention that $q=0$
to accord with earlier sections of this paper,
in which $\theta$-threshold strategies
were assumed to select candidate $i$
only if the strict inequality $v_i > \theta$
is satisfied, or if $i$ is the last candidate.

Having thus defined $\theta$ and $\tau_\lambda$ in terms
of $\alpha$, the following properties are satisfied.
\begin{enumerate}
  \item For all $y < \theta$, $\Pr(V(\tau_\lambda) > y) \ge \alpha.$
  \item For all $y > \theta$, \[ \Pr(V(\tau_\lambda) > y) = \sum_{t=1}^n \Pr(v_t > y) \cdot
  \Pr(\tau_\lambda \mbox{ doesn't stop before } t) \ge (1-\alpha) \sum_{t=1}^n \Pr(v_t > y) . \]
\end{enumerate}
The first property holds simply because $\Pr(V(\tau_\lambda) > y) \ge \Pr(V(\tau_\lambda) \ge \theta) = \alpha$.
To justify the second property, first note that the event $V_t>y$ is
independent of the event that $\tau_\lambda$ doesn't stop before $t$,
since the former depends only
on the value $v_t$ whereas the latter depends only on $v_1,\ldots,v_{t-1}$.
Then, note that for any $t$,
\[
  \Pr(\tau_\lambda \mbox{ doesn't stop before } t) \ge
  \Pr(\tau_\lambda \mbox{ stops at } n) \ge \Pr(V(\tau_\lambda) < \theta) = 1-\alpha .
\]

The two properties listed above are the only properties of
$\theta,\, \tau_\lambda$ required by the proofs of
Theorem~\ref{thm-prophet} and Proposition~\ref{random-order-prophet},
so those proofs remain valid when we interpret the
$\theta$-threshold strategy as a randomized $\theta$-threshold
strategy with appropriately chosen parameter $q$.

\section{Missing Proofs From Section \ref{sec-var}} \label{app-order} We begin this appendix with a lemma showing that the parameter $\rho$ defined
in Proposition~\ref{random-order-prophet}, which bounds the ratio
$\expect[V^*]/\expect[V(\pi_\lambda)]$ when candidates are observed
in random order, is very close to $\ln(\lambda)$.

\begin{lemma} \label{lem:rho}
  For $\lambda>0$ the equation
  \begin{equation} \label{eq:rho2}
  \rho - \tfrac{1}{\lambda+1}(\rho-1) =
  \ln(\lambda+1) - \ln(1 - \rho^{-1})
  \end{equation}
  has a unique solution $\rho>1$.
  Considering this $\rho$ as a function
  of $\lambda$, it satisfies
  $|\rho - \ln(\lambda)| \to 0$
  as $\lambda \to \infty$.
\end{lemma}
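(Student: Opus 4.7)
\textbf{Proof plan for Lemma~\ref{lem:rho}.} The plan is to first rewrite the defining equation in a form that makes monotonicity transparent, then apply the intermediate value theorem for uniqueness, and finally sandwich the solution between $\ln(\lambda) - \eps$ and $\ln(\lambda) + \eps$ for the asymptotic claim.

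\medskip

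\textbf{Step 1: Rewrite.} I would first clean up~\eqref{eq:rho2}. The left-hand side simplifies as
\[
\rho - \tfrac{1}{\lambda+1}(\rho - 1) \;=\; \rho - \frac{\rho-1}{\lambda+1},
\]
and the right-hand side can be written as $\ln\!\bigl(\tfrac{(\lambda+1)\rho}{\rho - 1}\bigr) = \ln(\lambda+1) + \ln\!\bigl(1 + \tfrac{1}{\rho-1}\bigr)$. Define
\[
f(\rho) \;=\; \rho - \frac{\rho - 1}{\lambda+1} - \ln(\lambda+1) - \ln\!\left(1 + \tfrac{1}{\rho-1}\right)
\]
for $\rho > 1$; the claim is that $f$ has a unique root and that root is $\ln(\lambda) + o(1)$.

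\medskip

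\textbf{Step 2: Existence and uniqueness.} Differentiating,
\[
f'(\rho) \;=\; \frac{\lambda}{\lambda+1} - \frac{1}{\rho} + \frac{1}{\rho-1} \;=\; \frac{\lambda}{\lambda+1} + \frac{1}{\rho(\rho-1)},
\]
which is strictly positive for all $\rho > 1$. Hence $f$ is strictly increasing. As $\rho \to 1^+$ the term $\ln\!\bigl(1 + \tfrac{1}{\rho-1}\bigr)$ blows up, so $f(\rho) \to -\infty$; as $\rho \to \infty$, $f(\rho)$ grows linearly like $\tfrac{\lambda}{\lambda+1}\rho$ while the logarithmic terms stay bounded, so $f(\rho) \to +\infty$. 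By the intermediate value theorem there is a unique $\rho = \rho(\lambda) > 1$ with $f(\rho) = 0$.

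\medskip

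\textbf{Step 3: Asymptotics.} For the second assertion, I would fix an arbitrary $\eps > 0$ and evaluate $f$ at $\rho = \ln(\lambda) \pm \eps$. At $\rho = \ln(\lambda) + \eps$ we have
\[
f(\rho) \;=\; \eps + \bigl[\ln(\lambda) - \ln(\lambda+1)\bigr] - \frac{\ln(\lambda) + \eps - 1}{\lambda + 1} - \ln\!\left(1 + \tfrac{1}{\ln(\lambda) + \eps - 1}\right).
\]
Each of the last three terms tends to $0$ as $\lambda \to \infty$ (the bracket is $O(1/\lambda)$, the fraction is $O(\log\lambda / \lambda)$, and the final logarithm is $O(1/\log\lambda)$), so $f(\ln(\lambda)+\eps) \to \eps > 0$. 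The identical computation at $\rho = \ln(\lambda) - \eps$ yields $f \to -\eps < 0$. Since $f$ is strictly increasing by Step~2, its root $\rho(\lambda)$ lies in the interval $(\ln(\lambda) - \eps,\, \ln(\lambda) + \eps)$ for all sufficiently large $\lambda$. As $\eps$ was arbitrary, $|\rho(\lambda) - \ln(\lambda)| \to 0$.

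\medskip

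\textbf{Anticipated obstacle.} The only subtle piece is verifying that the logarithmic terms truly vanish when $\rho$ is taken to be the \emph{candidate} value $\ln(\lambda) \pm \eps$ (rather than the unknown $\rho(\lambda)$); but this is a direct computation using $\ln(\lambda)\to\infty$, so no genuine difficulty arises. The algebra in Step~1 and the sign identification of $f'$ are the only places where care with the $(\lambda+1)$ denominators is needed.
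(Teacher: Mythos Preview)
Your proof is correct. The existence/uniqueness argument is essentially the paper's: both reduce to monotonicity of the difference between the two sides plus the intermediate value theorem (the paper phrases it as ``left side increasing, right side decreasing,'' you compute $f'>0$ explicitly---same content).

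For the asymptotic $|\rho - \ln\lambda| \to 0$ the two arguments diverge slightly. The paper first extracts the lower bound $\rho > \ln(\lambda+1)$ directly from~\eqref{eq:rho2} (so $\rho \to \infty$), then algebraically solves the defining equation for $\rho - \ln\lambda$ to obtain the explicit identity
\[
\rho - \ln\lambda \;=\; \ln\!\Bigl(\tfrac{\lambda+1}{\lambda}\Bigr) + \tfrac{\lambda+1}{\lambda}\ln\!\Bigl(\tfrac{\rho}{\rho-1}\Bigr) - \tfrac{1}{\lambda},
\]
and observes each term on the right vanishes. Your sandwich argument---evaluating $f$ at the candidate points $\ln\lambda \pm \eps$ and invoking the monotonicity already in hand---sidesteps the need to first prove $\rho\to\infty$ and avoids the algebraic manipulation, at the cost of not producing an explicit expression for the error. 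Both routes ultimately rest on the same three vanishing terms, so the difference is one of packaging rather than substance.
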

\begin{proof}
  The left side of equation~\eqref{eq:rho2}
  is a strictly increasing continuous
  function of $\rho$ and
  the right side is a strictly decreasing
  continuous function of $\rho$ in the
  range $1 < \rho < \infty$.
  As $\rho \to 1$ the left side converges
  to 1 while the right side converges to
  $\infty$. As $\rho \to \infty$ the left
  side converges to $\infty$ while the
  right side converges to $\ln(\lambda+1)$.
  Since the difference between the two sides
  is a continuous, strictly monotonic function
  of $\rho$, there is a unique value of $\rho > 1$
  that equates the two sides.

  Rewriting equation~\eqref{eq:rho2} as
  \begin{equation} \label{eq:rho3}
    \rho  = \ln(\lambda+1) + \ln \left( \frac{1}{1 - \rho^{-1}} \right) + \frac{1}{\lambda+1} (\rho-1),
  \end{equation}
  we see that when $\rho>1$ all three of the quantities on the right side are positive,
  hence $\rho > \ln(\lambda+1) > \ln(\lambda)$. To bound the difference $\rho - \ln(\lambda)$
  we rewrite equation~\eqref{eq:rho3} as
  \begin{align} \nonumber
    \frac{\lambda}{\lambda+1} \cdot \rho & =
    \ln(\lambda+1) + \ln \left( \frac{1}{1 - \rho^{-1}} \right) - \frac{1}{\lambda+1} \\
    \nonumber
    \rho & = \frac{\lambda+1}{\lambda} \ln(\lambda+1) + \frac{\lambda+1}{\lambda} \ln \left( \frac{1}{1 - \rho^{-1}} \right) - \frac{1}{\lambda} \\
    \nonumber
         & = \ln(\lambda+1) + \frac{\ln(\lambda+1)}{\lambda} +
             \frac{\lambda+1}{\lambda} \ln \left( \frac{\rho}{\rho-1} \right) - \frac{1}{\lambda} \\
    \rho - \ln(\lambda) &= \ln \left( \frac{\lambda+1}{\lambda} \right) +
      \frac{\lambda+1}{\lambda} \ln \left( \frac{\rho}{\rho-1} \right) -
      \frac{1}{\lambda} .
      \label{eq:rho4}
    \end{align}
    The lower bound $\rho > \ln(\lambda)$ implies that
    as $\lambda$ tends to infinity, $\rho$ tends to infinity
    as well. All of the quantities on the right side
    of~\eqref{eq:rho4} --- namely, $\ln \left( \frac{\lambda+1}{\lambda} \right)$,
    $\frac{\lambda+1}{\lambda} \ln \left( \frac{\rho}{\rho-1} \right) $, and
    $\frac{1}{\lambda}$ --- converge to zero as $\lambda$ and $\rho$ both
    tend to infinity.
\end{proof}

For both of the following proofs recall that $V(\pi_\lambda(V_1,\ldots, V_n))$ is a random variables equal to the value of the candidate selected by the optimal $\lambda$-biased stopping rule for the sequence $(V_1,\ldots, V_n)$.

\begin{claim} \label{clm-2-point-tight}
There exists a family of instances with all candidates drawn from 2-point distributions, such that as $\lambda$ goes to infinity the ratio between $\expect[V^*]$ and $\expect[V(\pi_\lambda)]$ in the best ordering approaches $2$.
\end{claim}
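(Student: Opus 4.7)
The plan is to exhibit a two-candidate family of instances indexed by~$\lambda$. I will take $V_1$ supported on $\{M,a\}$ with $\Pr(V_1=M)=1/M$, and $V_2$ supported on $\{1,0\}$ with $\Pr(V_2=1)=1-\eps$; both are $2$-point distributions. The parameters $M,a,\eps$ will be chosen as functions of $\lambda$ so that $M\to\infty$, $a,\eps\to 0$, and $a\eps\lambda\to\infty$; concretely, $M=\lambda$ and $a=\eps=\lambda^{-1/3}$ work. Intuitively, $V_1$ is a safe candidate with a small positive floor $a$ and a tiny chance of a huge payoff, while $V_2$ is a high-probability unit payoff with a small chance of $0$.

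For $\expect[V^*]$ I will condition on $V_1$: the rare event $V_1=M$ contributes exactly $1$, while the common event $V_1=a$ gives $V^*=\max(a,V_2)$, which equals $1$ with probability $1-\eps$ and $a$ with probability $\eps$. Summing yields $\expect[V^*]=1+(1-1/M)\bigl[(1-\eps)+\eps a\bigr]\to 2$ as $\lambda\to\infty$.

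Next I will analyze both possible orderings. In the ordering $(V_1,V_2)$, the agent takes $V_1=M$ when realized, but at $V_1=a$ faces stopping utility $a$ versus continuing utility $(1-\eps)\cdot 1+\eps(0-\lambda a)=(1-\eps)-\eps\lambda a$; stopping dominates exactly when $a(1+\eps\lambda)\ge 1-\eps$, which holds once $a\eps\lambda$ is sufficiently large. Thus the ordering's expected value is $\frac{1}{M}M+(1-\frac{1}{M})a=1+(1-1/M)a\to 1$. In the ordering $(V_2,V_1)$, the agent takes $V_2=1$ when realized (since continuing would risk a loss of $\lambda(1-a)$ against the overwhelmingly likely realization $V_1=a$) and, after $V_2=0$, continues with reference $0$ to ultimately receive $\expect[V_1]=1+(1-1/M)a$; the ordering's value is $(1-\eps)+\eps[1+(1-1/M)a]\to 1$.

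The maximum of the two orderings therefore tends to $1$ while $\expect[V^*]\to 2$, giving the desired ratio~$\to 2$. The main delicate point is the joint scaling of $M,a,\eps$ with $\lambda$: the parameters $a$ and $\eps$ must vanish so that both orderings' values collapse to $1$, yet the product $a\eps\lambda$ must grow so that loss aversion is strong enough to pin the agent to $V_1=a$ in the first ordering rather than letting it continue and collect $\expect[V_2]\approx 1$. The scaling $a=\eps=\lambda^{-1/3}$, $M=\lambda$ simultaneously satisfies both requirements, so the ratio approaches $2$ along this family as $\lambda\to\infty$.
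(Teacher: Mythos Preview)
Your proof is correct and follows essentially the same strategy as the paper: a two-candidate instance with $2$-point distributions in which $\expect[V^*]\to 2$ while in either ordering the $\lambda$-biased agent is pinned to an expected value tending to~$1$. The paper's construction differs only in parametrization---it keeps all values bounded and sets the high value of one candidate equal to the expectation of the other (making one ordering's value exactly $\expect[V_1]$ by design)---whereas you let the jackpot $M=\lambda$ grow unboundedly and verify the stopping inequalities directly via the scaling $a=\eps=\lambda^{-1/3}$.
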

\begin{proof}
We consider candidates with values drawn from the following distributions, for small $\eps>0$:
 	\begin{align*}
 V_1=
 \begin{cases}
 1,~~~~w.p~~ \eps \\
 \eps^2~~~~w.p~~ 1-\eps
 \end{cases}
 ,~~~
 V_2=
 \begin{cases}
 \eps+\eps^2(1-\eps),~~~~w.p~~ 1-\eps^2\\
 0,~~~~w.p~~ \eps^2
 \end{cases}
 \end{align*}

In this family of instances the expected value achieved by a prophet is:
\begin{align*}
\expect[V^*] &=\eps+(1-\eps)(1-\eps^2)\cdot ( \eps+\eps^2(1-\eps))+(1-\eps)\eps^4 \geq 2\eps -O(\eps^2)
\end{align*}
We show that in any ordering the expected payoff of a biased agent with $\lambda >1/\eps^{4}$ is $\expect[V(\pi_\lambda)] =\expect[V_1] = \eps+\eps^2(1-\eps) \leq \eps-\eps^2$. Thus, as $\eps$ goes to $0$ the ratio between $\expect[V^*]$ and $\expect[V(\pi_\lambda)]$ is approaching $2$. To see that $\expect[V(\pi_\lambda)] =\expect[V_1]$ observe that:
\begin{itemize}
	\item $\expect[V(\pi_\lambda(V_1,V_2))] = \expect[V_1]$ - We show that even when a low value is realized $V_1$ is selected:
	\begin{align*}
	\eps^2 \geq (1-\eps^2)(\eps+(1-\eps)\eps^2)-\lambda \cdot \eps^2\eps^2
	\end{align*}
	For $\lambda > 1/\eps^4$ the right hand-side is negative and hence the inequality holds.
	\item $\expect[V(\pi_\lambda(V_2,V_1))] = \expect[V_1]$  -  Observe that the high value of candidate $2$ equals $\expect[V_1]$. Thus, it suffices to show that the agents selects $V_2$ when a high value is realized. To see why this is the case observe that for any $\lambda>0$:
	\begin{align*}
	\expect[V_1] \geq \expect[V_1] -\lambda(1-\eps) (\eps+\eps^2(1-\eps) - \eps^2 )
	\end{align*}
\end{itemize}
\end{proof}

\end{document}